\documentclass[sigconf]{acmart}

\usepackage{amssymb}
\setcounter{tocdepth}{3}
\usepackage{xcolor}
\usepackage{soul}
\usepackage[utf8]{inputenc}
\usepackage{graphicx}
\usepackage{booktabs} 
\usepackage{algorithmic}
\usepackage{algorithm}
\usepackage{epsfig}
\usepackage{amsfonts}
\usepackage{subfigure}
\usepackage{multirow}
\usepackage{mathrsfs} 
\usepackage{url}
\PassOptionsToPackage{hyphens}{url}
\usepackage{footnote}
\usepackage{balance}
\usepackage{fixmath}
\usepackage{bm}
\usepackage{natbib}
\usepackage{mathtools}

\setcopyright{none}






\begin{document}
\title{Novel Adaptive Algorithms for Estimating Betweenness,\\ Coverage and $k$-path Centralities}

\author{Mostafa Haghir Chehreghani}
\affiliation{%
  \institution{LTCI, T\'el\'ecom ParisTech}
  \streetaddress{Universit\'e Paris-Saclay}
  \city{Paris} 
  \postcode{75013}
}
\email{mostafa.chehreghani@gmail.com}

\author{Albert Bifet}
\affiliation{%
  \institution{LTCI, T\'el\'ecom ParisTech}
  \streetaddress{Universit\'e Paris-Saclay}
  \city{Paris} 
  \postcode{75013}
}
\email{albert.bifet@telecom-paristech.fr}

\author{Talel Abdessalem}
\affiliation{%
  \institution{LTCI, T\'el\'ecom ParisTech}
  \streetaddress{Universit\'e Paris-Saclay}
  \city{Paris} 
  \postcode{75013}
}
\email{talel.abdessalem@telecom-paristech.fr}

\begin{abstract}
An important index
widely used to analyze social and information networks is {\em betweenness centrality}.
In this paper, first given a directed network $G$ and a vertex $r\in V(G)$,
we present a novel adaptive algorithm
for estimating betweenness score of $r$.
Our algorithm first computes two subsets of the vertex set of $G$,
called $\mathcal{RF}(r)$ and $\mathcal{RT}(r)$,
that define the sample spaces of the start-points and the end-points of the 
samples.
Then, it adaptively samples from $\mathcal{RF}(r)$ and $\mathcal{RT}(r)$
and stops as soon as some condition is satisfied.
The stopping condition depends on the samples met so far, $|\mathcal{RF}(r)|$ and $|\mathcal{RT}(r)|$.
We show that compared to the well-known existing methods,
our algorithm gives a more efficient $(\lambda,\delta)$-approximation.
Then, we propose a novel algorithm for estimating $k$-path centrality of $r$.
Our algorithm is based on computing two sets $\mathcal{RF}(r)$ and $\mathcal{D}(r)$.
While $\mathcal{RF}(r)$ defines the sample space of the source vertices of the sampled paths,
$\mathcal{D}(r)$ defines the sample space of the other vertices of the paths.
We show that in order to give a $(\lambda,\delta)$-approximation of the $k$-path score of $r$,
our algorithm requires considerably less samples.
Moreover, it processes each sample faster and with less memory.
Finally, we empirically evaluate our proposed algorithms and
show their superior performance. 
Also, we show that they can be used to
efficiently compute centrality scores of a set of vertices.
\end{abstract}

%
%

\begin{CCSXML}
<ccs2012>
<concept>
<concept_id>10002950.10003624.10003633.10010917</concept_id>
<concept_desc>Mathematics of computing~Graph algorithms</concept_desc>
<concept_significance>500</concept_significance>
</concept>
</ccs2012>
\end{CCSXML}

\ccsdesc[500]{Mathematics of computing~Graph algorithms}

\keywords{Social network analysis, directed graphs, betweenness centrality,
coverage centrality, $k$-path centrality,
approximate algorithm, adaptive algorithm}

\maketitle

\section{Introduction}
\label{sec:introduction}

Graphs (in particular {\em directed graphs}) are a widely used tool for modeling data in different domains,
including social networks, information networks, road networks and the world wide web.
{\em Centrality} is a structural property of vertices or edges in the graph
that indicates their importance.
For example, it determines the importance of a person within a social network,
or a road within a road network.
There are several centrality notions in the literature,
including {\em betweenness centrality} \cite{DBLP:journals/cj/Chehreghani14},
{\em coverage centrality} \cite{Yoshida:2014:ALA:2623330.2623626} and
{\em $k$-path centrality} \cite{Alahakoon:2011:KCN:1989656.1989657}.


Although there exist polynomial time algorithms for computing these indices,
the algorithms are expensive in practice.
However, there are observations that may improve computation of centrality indices
in practice.
In several applications it is sufficient to
compute centrality score of only one or a few vertices.
For instance, the
index might be computed for only core vertices of communities in social/information
networks \cite{jrnl:Wang}
or for only hubs in communication networks.
Another example, discussed in \cite{DBLP:conf/complenet/AgarwalSCI15,DBLP:journals/corr/AgarwalSCI14}, 
is handling cascading failures.
It has been shown that the failure of a vertex with a higher
betweenness score may cause greater collapse of the network
\cite{Stergiopoulos201534}.
Therefore, failed vertices should be recovered in the order of their betweenness scores.
This means it is required to compute betweenness scores of only failed vertices,
that usually form a very small subset of vertices.
Note that these vertices are not necessarily
those that have the highest betweenness scores,
hence, algorithms that identify the top-$k$ vertices \cite{Riondato2016} are not applicable.
Another example
where in a road network it is required to compute betweenness ($k$-path) score of only one vertex,
is discussed in \cite{DBLP:journals/cj/Chehreghani14}.
In this paper, we exploit these practical observations to develop more effective algorithms.

In recent years, several approximate algorithms
have been proposed in the literature
to estimate betweenness/coverage/$k$-path centralities.
Some of them are based on
sampling {\em shortest paths} \cite{Riondato2016,DBLP:conf/esa/BorassiN16} 
and some others are based on sampling {\em source vertices} (or
source-destination vertices) \cite{proc:Bader,DBLP:journals/cj/Chehreghani14}. 
Very recently, 
a technique has been proposed that significantly improves
the efficiency of the source sampler algorithms in directed graphs.
In this technique, for a given vertex $r$,
first the set $\mathcal{RF}(r)$ of vertices that have a non-zero contribution (dependency score)
to the betweenness score of $r$,
is computed. Then, this set is used for sampling source vertices \cite{DBLP:journals/corr/abs-1708-08739}.
$\mathcal{RF}(r)$, which can be computed very efficiently,
is usually much smaller than the vertex set of the graph, hence,
source vertex sampling can be done more effectively.
However, the error bounds presented in \cite{DBLP:journals/corr/abs-1708-08739}
are not adaptive and are not of practical convenience
(see Section~\ref{sec:experimentalresults}).

In the current paper, to estimate betweenness centrality,
we further improve this technique by 
not only restricting the source vertices to $\mathcal{RF}(r)$,
but also finding a set $\mathcal{RT}(r)$ of vertices that can be a destination
of a shortest path that passes over $r$.
Given a directed graph $G$ and a vertex $r$ of $G$,
our algorithm first computes two subsets of the vertex set of $G$,
called $\mathcal{RF}(r)$ and $\mathcal{RT}(r)$.
These subsets can be computed very effectively and
define the sample spaces of the start-points and the end-points of the 
samples (i.e., shortest paths).
Then, it {\em adaptively} samples from $\mathcal{RF}(r)$ and $\mathcal{RT}(r)$
and stops as soon as some condition is satisfied.
The stopping condition depends on the samples met so far, $|\mathcal{RF}(r)|$ and $|\mathcal{RT}(r)|$.
We theoretically analyze our algorithm and show that in order to
estimate betweenness of $r$ with a maximum error $\lambda$ with probability $1-\delta$,
it requires considerably less samples than the well-known existing algorithms.
In fact, our algorithm tries to collect the advantages of all existing methods. 
On the one hand, unlike \cite{DBLP:journals/corr/abs-1708-08739},
it is adaptive and its error bounds are of practical convenience.
On the other hand, unlike algorithms such as \cite{Riondato2016,DBLP:conf/esa/BorassiN16},
it uses the sets $\mathcal{RF}(r)$ and $\mathcal{RT}(r)$ to prune the search space, that gives
significant theoretical and empirical improvements.
We also discuss how our algorithm can be revised to compute {\em coverage centrality} of $r$.

Then, we propose a novel adaptive algorithm for estimating {\em $k$-path centrality} of $r$.
Our algorithm is based on computing two sets $\mathcal{RF}(r)$ and $\mathcal{D}(r)$.
While $\mathcal{RF}(r)$ defines the sample space of the source vertices of the sampled paths,
$\mathcal{D}(r)$ defines the sample space of the other vertices of the paths.
We show that in order to give a $(\lambda,\delta)$-approximation of the $k$-path score of $r$,
our algorithm requires considerably less samples.
Moreover, it can process each sampled path faster and with less memory.
We also propose a method to determine the number of samples adaptively 
and based on the samples met so far and $|\mathcal{RF}(r)|$.
In the end, we evaluate the empirical efficiency of our centrality estimation algorithms
over several real-world datasets.
We show that in practice, 
while our betweenness estimation algorithm is usually faster
than the well-known existing algorithms,
it generates considerably more accurate results. 
Furthermore, we show that while our algorithm is intuitively designed to estimate
betweenness score of only one vertex,
it can also be used to
effectively compute betweenness scores of a set of vertices.
Finally, we show that our algorithm for $k$-path centrality is considerably more accurate 
than existing methods.

The rest of this paper is organized as follows. 
In Section~\ref{sec:preliminaries},
we introduce
preliminaries and necessary definitions used in the paper.
In Section~\ref{sec:relatedwork}, we give an overview on related work.
In Section~\ref{sec:betweennessdirected}, we
introduce our betweenness/coverage estimation algorithm and theoretically analyze it.
In Section~\ref{sec:kpath}, we
present our $k$-path centrality estimation algorithm and its analysis.
In Section~\ref{sec:experimentalresults}, we empirically evaluate our proposed algorithms 
and show their high efficiency, compared to well-known existing algorithms.
Finally, the paper is concluded in Section~\ref{sec:conclusion}.

\section{Preliminaries}
\label{sec:preliminaries}

We assume that the reader is familiar with basic concepts in graph theory.
Throughout the paper, $G$ refers to a directed graph.
For simplicity, we assume that $G$ is a connected and loop-free graph without multi-edges.
Throughout the paper, we assume that $G$ is an unweighted graph,
unless it is explicitly mentioned that $G$ is weighted.
$V(G)$ and $E(G)$ refer to the set of vertices and the set of edges of $G$, respectively.
For a vertex $v \in V(G)$,
the number of head ends adjacent to $v$
is called its {\em in degree} and
the number of tail ends adjacent to $v$ is called its {\em out degree}.
For a vertex $v$, by $N(v)$ we denote the set of outgoing neighbors of $v$.

A \textit{shortest path}
from $u \in V(G)$ to $v \in V(G)$ is a path
whose length is minimum, among all paths from $u$ to $v$.
For two vertices $u,v \in V(G)$, if $G$ is unweighted,
by $d(u,v)$ we denote the length (the number of edges) of a shortest path connecting $u$ to $v$.
If $G$ is weighted, 
$d(u,v)$ denotes the sum of the weights of the edges of a shortest path connecting $u$ to $v$.
By definition, $d(u,u)=0$.
Note that in directed graphs, $d(u,v)$ is not necessarily equal to $d(v,u)$.
The {\em vertex diameter} of $G$, denoted by $VD(G)$, is defined as the number of vertices
of the longest shortest path of the graph.
For $s,t \in V(G)$, $\sigma_{st}$ denotes the number of shortest paths between $s$ and $t$, and
$\sigma_{st}(v)$ denotes the number of shortest paths between $s$ and $t$ that also pass through $v$. 
{\em Betweenness centrality} of a vertex $r$ is defined as:
$bc(r)= \frac{1}{|V(G)| \cdot \left( |V(G)|-1 \right)} 
\sum_{s,t \in V(G) \setminus \{r\}} \frac{\sigma_{st}(r)}{\sigma_{st}}.$
{\em Coverage centrality} of $r$ is defined as follows \cite{Yoshida:2014:ALA:2623330.2623626}: 
\[cc(v) = \frac{|\{ (s,t)\in V(G)\times V(G) : v \text{ is on a shortest path from } s \text{ to } t\}|}{|V(G)|\cdot (|V(G)|-1)}.\]

Let $p_{s,l}$ denote a simple path $p$ that starts with vertex $s$ and has $l$ edges.
Let also $u_0,u_1,\ldots,u$ denote the vertices in the order they appear in $p_{s,l}$, with $s=u_0$.
We define $\mathcal W(p_{s,l})$ as $\Pi_{i=1}^{l}\frac{1}{|N(u_{i-1}) \setminus \{s,u_1,u_2,\ldots,u_{i-2}|}$.
We say $\chi[v \in p_{s,l}]$ returns $1$ if $v$ lies on $p_{s,l}$, and $0$ otherwise.
For a vertex $r$ in an unweighted graph $G$,
its {\em $k$-path} centrality is defined as follows \cite{Alahakoon:2011:KCN:1989656.1989657}:\footnote{The original
definition presented in \cite{Alahakoon:2011:KCN:1989656.1989657} does not include the normalization part $\frac{1}{k|V(G)|}$.
Here, due to consistency with the definitions of betweenness and coverage centralities,
we use this normalized definition.}
\[pc(r)=\frac{1}{k|V(G)|} \sum_{s\in V(G)\setminus\{r\}} \sum_{1\leq l\leq k} \sum_{p_{s,l}} \chi\left[r\in p_{s,l} \right] \mathcal W(p_{s,l}).\]
The $k$-path centrality of $r$
in a weighted graph is defined in a similar way \cite{Alahakoon:2011:KCN:1989656.1989657}.
We here omit it due to lack of space.


\section{Related work}
\label{sec:relatedwork}

Brandes \cite{jrnl:Brandes} introduced an efficient algorithm 
for computing betweenness centrality of all vertices,
which is performed respectively in 
$O(|V(G)||E(G)|)$ and $O(|V(G)| |E(G)| + |V(G)|^2 \log |V(G)|)$ 
times for unweighted and weighted networks with positive weights.
The authors of \cite{DBLP:conf/sdm/CatalyurekKSS13} presented the
{\em compression} and {\em shattering} techniques to improve
the efficiency of Brandes's algorithm.
In \cite{jrnl:Everett} and \cite{conf:cbcwsdm},
the authors respectively studied {\em group betweenness centrality} and {\em co-betweenness centrality},
two natural extensions of betweenness centrality to sets of vertices.
In
\cite{jrnl:Brandes3} and \cite{proc:Bader}, the authors proposed approximate algorithms based on
selecting some source vertices and
computing dependency scores of them on the other vertices in the graph and scaling the results.
In the algorithm of Geisberger et.al. \cite{conf:Geisberger},
the method for aggregating dependency
scores changes so that vertices do not profit from being near the selected source vertices.
Chehreghani \cite{DBLP:journals/cj/Chehreghani14} 
proposed a non-uniform sampler for
unbiased estimation of the betweenness score of a vertex.
Riondato and Kornaropoulos \cite{Riondato2016}
presented shortest path samplers for estimating betweenness centrality of
all vertices or the $k$ vertices 
that have the highest betweenness scores.
Riondato and Upfal \cite{RiondatoKDD20116} introduced the \textsf{ABRA} algorithm
and used {\em Rademacher average}
to determine the number
of required samples.
Recently, Borassi and Natale \cite{DBLP:conf/esa/BorassiN16} presented \textsf{KADABRA},
which is adaptive and uses bb-BFS to sample shortest paths.
Finally, in \cite{DBLP:journals/corr/abs-1708-08739} the authors presented
exact and approximate algorithms for
computing betweenness centrality of one vertex or a small set of vertices in directed graphs.
As discussed earlier, our betweenness estimation algorithm tries to have
the advantages of both algorithms
presented in \cite{DBLP:conf/esa/BorassiN16} and \cite{DBLP:journals/corr/abs-1708-08739}.

Yoshida \cite{Yoshida:2014:ALA:2623330.2623626} presented
an algorithm that uses $\Theta(\log |V(G)|/\lambda^2)$ samples
to estimate coverage centrality of a vertex within an additive error $\lambda$
with probability $1-1/n^3$.
Alahakoon et.al. \cite{Alahakoon:2011:KCN:1989656.1989657}
introduced $k$-path centrality of a vertex
and proposed the \textsf{RA-kpath} algorithm to estimate it.
Mahmoody et.al. \cite{Mahmoody:2016:SBC:2939672.2939869} 
showed that $k$-path centrality admits a hyper-edge sampler
and proposed an algorithm that 
picks a source vertex uniformly at random, and generates a random simple
path of length at most $k$,
and outputs the generated simple path as a hyper-edge \cite{Mahmoody:2016:SBC:2939672.2939869}.
The key difference between our algorithm and these two algorithms is that
our algorithm restricts the sample spaces of the source vertices and the other vertices of the paths
to the sets $\mathcal{RF}$ and $\mathcal{D}$, respectively.
This considerably improves the error guarantee and empirical efficiency of our algorithm.

\section{Betweenness centrality}
\label{sec:betweennessdirected}

In this section,
we present our adaptive approximate algorithm
for estimating betweenness centrality of a given vertex $r$ in a directed graph $G$.
We start by introducing the sets $\mathcal{RF}(r)$ and $\mathcal{RT}(r)$,
that are used to define the sample spaces of start-points and end-points of shortest paths.

\begin{definition}
\label{def:rf}
Let $G$ be a directed graph and $r,s \in V(G)$.
We say $r$ is {\em reachable from} $s$ if there is a (directed) path from $s$ to $r$. 
The set of vertices that $r$ is {\em reachable from} them is denoted by $\mathcal{RF}(r)$.
\end{definition}

\begin{definition}
\label{def:tf}
Let $G$ be a directed graph and $r,t \in V(G)$.
We say $r$ is {\em reachable to} $t$ if there is a (directed) path from $r$ to $t$. 
The set of vertices that $r$ is {\em reachable to} them is denoted by $\mathcal{RT}(r)$.
\end{definition}

For a given vertex $r \in V(G)$,
$\mathcal{RF}(r)$ and $\mathcal{RT}(r)$ can be efficiently computed,
using {\em reverse graph}.
\begin{definition}
Let $G$ be a directed graph.
{\em Reverse graph} of $G$, denoted by $R(G)$, 
is a directed graph such that:
(i) $V(R(G))=V(G)$, and
(ii) $(u,v) \in E(R(G))$ if and only if $(v,u) \in E(G)$ \cite{DBLP:journals/corr/abs-1708-08739}.
\end{definition}

To compute $\mathcal{RF}(r)$, we act as follows \cite{DBLP:journals/corr/abs-1708-08739}:
(i) first, by flipping the direction of the edges of $G$,
$R(G)$ is constructed,
(ii) then, if $G$ is weighted, the weights of the edges are ignored,
(iii) finally, a breadth first search (BFS) or a depth-first search (DFS) on $R(G)$ starting from $r$ is performed.
All the vertices that are met during the BFS (or DFS), except $r$, are added to $\mathcal{RF}(r)$.
To compute $\mathcal{RT}(r)$, we act as follows:
(i) if $G$ is weighted, the weights of the edges are ignored,
(ii) a breadth first search (BFS) or a depth-first search (DFS) on $G$ starting from $r$ is performed.
All the vertices that are met during the BFS (or DFS), except $r$, are added to $\mathcal{RT}(r)$.
Both $\mathcal{RF}(r)$ and $\mathcal{RT}(r)$ can be computed in $O(|E(G)|)$ time,
for both unweighted and weighted graphs.
Furthermore, we have the following lemma.
\begin{lemma}
\label{lemma:exactbc}
Given a directed graph $G$ and $r \in V(G)$,
the exact betweenness score of $r$ can be computed as follows:
$bc(r)=\frac{1}{|V(G)| \left( |V(G)|-1 \right)}\sum_{s\in \mathcal{RF}(r)}\sum_{t\in \mathcal{RT}(r)} \frac{\sigma_{st}(r)}{\sigma_{st}}.$ 
\end{lemma}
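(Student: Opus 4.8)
The plan is to start from the defining formula
\[
bc(r)=\frac{1}{|V(G)|\left(|V(G)|-1\right)}\sum_{s,t\in V(G)\setminus\{r\}}\frac{\sigma_{st}(r)}{\sigma_{st}},
\]
and argue that every term of the double sum that involves a vertex outside $\mathcal{RF}(r)$ as its start-point, or a vertex outside $\mathcal{RT}(r)$ as its end-point, contributes $0$. Hence the sum over $s,t\in V(G)\setminus\{r\}$ collapses to the sum over $s\in\mathcal{RF}(r)$ and $t\in\mathcal{RT}(r)$, which is exactly the claimed identity. The only thing to verify carefully is the claim about which terms vanish, together with the bookkeeping that the restricted index sets are consistent with the exclusion of $r$ itself.

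The key step is the following observation about a single term $\sigma_{st}(r)/\sigma_{st}$ with $s,t\neq r$. If $\sigma_{st}(r)>0$, then there exists at least one shortest $s$-$t$ path passing through $r$; write such a path as $s\rightsquigarrow r\rightsquigarrow t$. Its prefix is a (directed) path from $s$ to $r$, so $r$ is reachable from $s$, i.e. $s\in\mathcal{RF}(r)$ by Definition~\ref{def:rf}; its suffix is a (directed) path from $r$ to $t$, so $r$ is reachable to $t$, i.e. $t\in\mathcal{RT}(r)$ by Definition~\ref{def:tf}. Taking the contrapositive: if $s\notin\mathcal{RF}(r)$ or $t\notin\mathcal{RT}(r)$, then $\sigma_{st}(r)=0$, so that term is $0$ (and since $r$ is on no shortest $s$-$t$ path in that case, we do not even need to worry about $\sigma_{st}$ being $0$ in the denominator — by the usual convention the term is taken to be $0$ when $\sigma_{st}=0$ as well). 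Conversely, note $r\notin\mathcal{RF}(r)$ and $r\notin\mathcal{RT}(r)$ by construction (the BFS/DFS descriptions explicitly exclude $r$, and anyway $G$ is loop-free), so restricting the outer indices to $s\in\mathcal{RF}(r)$, $t\in\mathcal{RT}(r)$ automatically enforces $s\neq r$ and $t\neq r$, and no nonzero term is dropped.

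Putting these together, we get
\[
\sum_{s,t\in V(G)\setminus\{r\}}\frac{\sigma_{st}(r)}{\sigma_{st}}
=\sum_{s\in\mathcal{RF}(r)}\ \sum_{t\in\mathcal{RT}(r)}\frac{\sigma_{st}(r)}{\sigma_{st}},
\]
since all the terms discarded on the right-hand side are zero and all the terms present are unchanged. Dividing by $|V(G)|\left(|V(G)|-1\right)$ yields the stated formula.

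I do not anticipate a genuine obstacle here; the argument is essentially a support-of-the-summand computation. The one subtlety worth a sentence in the write-up is the degenerate-denominator convention: a term $\sigma_{st}(r)/\sigma_{st}$ is understood to be $0$ when $\sigma_{st}=0$ (equivalently, when $t$ is unreachable from $s$), which is consistent because $\sigma_{st}=0$ forces $\sigma_{st}(r)=0$; this guarantees the identity is not an artifact of how undefined terms are handled. A second minor point, if one wants to be fully rigorous about "shortest path" in directed graphs, is that the concatenation of a shortest $s$-$r$ path with a shortest $r$-$t$ path need not itself be a shortest $s$-$t$ path — but we never need that direction; we only need that \emph{if} some shortest $s$-$t$ path happens to pass through $r$, its two halves are directed $s$-$r$ and $r$-$t$ paths, which is immediate.
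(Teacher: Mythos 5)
Your proof is correct: the paper states Lemma~\ref{lemma:exactbc} without an explicit proof, and your support argument (if $\sigma_{st}(r)>0$ then the prefix and suffix of a shortest $s$--$t$ path through $r$ witness $s\in\mathcal{RF}(r)$ and $t\in\mathcal{RT}(r)$, so all discarded terms vanish) is exactly the reasoning the paper implicitly relies on when it says only shortest paths from $\mathcal{RF}(r)$ to $\mathcal{RT}(r)$ need be considered. One tiny quibble: loop-freeness alone does not rule out $r$ reaching itself via a directed cycle, so $r\notin\mathcal{RF}(r)$ and $r\notin\mathcal{RT}(r)$ rest on the explicit exclusion of $r$ in the definitions/BFS construction, which you also cite, so the argument stands.
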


Lemma~\ref{lemma:exactbc} says that
in order to compute betweenness score of $r$,
we require to only consider those shortest paths that start from a vertex in $\mathcal{RF}(r)$ and 
end with a vertex in $\mathcal{RT}(r)$ (and check which ones pass over $r$).
This means methods that sample $s$ and $t$
uniformly at random from $V(G)$ \cite{Riondato2016,RiondatoKDD20116,DBLP:conf/esa/BorassiN16}
can be revised to sample $s$ and $t$ from $\mathcal{RF}(r)$ and $\mathcal{RT}(r)$, respectively.
In the current paper, we revise the \textsf{KADABRA} algorithm \cite{DBLP:conf/esa/BorassiN16}
by restricting $s$ and $t$ to $\mathcal{RF}(r)$ and $\mathcal{RT}(r)$, respectively.
We then analyze the resulting algorithm and show that 
it can be adaptive where in order to give an ($\lambda$,$\delta$)-approximation,
it requires much less samples than \textsf{KADABRA}.

Algorithm~\ref{algorithm:abad} shows the high level pseudo code of our proposed algorithm,
called \textsf{ABAD}\footnote{\textsf{ABAD} is an abbreviation for
\textsf{A}daptive \textsf{B}etweenness \textsf{A}pproximation algorithm for \textsf{D}irected graphs.}.
The input parameters of this algorithm are a directed graph $G$, a vertex $r \in V(G)$
for which we want to estimate betweenness score, and real values $\lambda,\delta\in (0,1)$
used to determine the error bound.
\textsf{ABAD} first computes $\mathcal{RF}(r)$ and $\mathcal{RT}(r)$
and stores them in $RF$ and $RT$, respectively.
Then, at each iteration $\tau$
of the loop in Lines~\ref{line:loop1:start}-\ref{line:loop1:end},
\textsf{ABAD} picks up vertices $s\in RF$ and $t\in RT$ uniformly at random.
Then, it picks up a shortest path $\pi$ among all possible shortest paths
from $s$ to $t$, uniformly at random.
To do so, it uses 
{\em balanced bidirectional breadth-first search} ({\em bb-BFS})
where at the same time it performs a BFS from $s$ and another BFS from $t$
and stops when the two BFSs touch each other \cite{bidirectionalsearch}.
Finally, if $r$ is on $\pi$,
\textsf{ABAD} estimates betweenness score of $r$ at iteration $\tau$,
$\mathbold{c}_{\tau}$, as $\frac{|RF| |RT|}{|V(G)|(|V(G)|-1)}$; otherwise,
$\mathbold{c}_{\tau}$ will be $0$.
The final estimation of the betweenness score of $r$
is the average of the estimations of different iterations.

\begin{algorithm}
\caption{High level pseudo code of the \textsf{ABAD} algorithm
.}
\label{algorithm:abad}
\begin{algorithmic} [1]
\STATE \textbf{Input.} A directed network $G$, a vertex $r \in V(G)$,
and real numbers $\lambda,\delta\in (0,1)$.
\STATE \textbf{Output.} Betweenness score of $r$.
\IF{{\em in degree} of $r$ is $0$ or {\em out degree} of $r$ is $0$}
\RETURN $0$.
\ENDIF
\STATE $\mathbold{c} \leftarrow 0$, $\tau \leftarrow 0$.
\STATE $RF \leftarrow$ compute $\mathcal{RF}(r)$, $RT \leftarrow$ compute $\mathcal{RT}(r)$.
\WHILE{$not$ \textsf{Stop}}\label{line:loop1:start}
\STATE $\mathbold{c}_{\tau} \leftarrow 0$.
\STATE Pick up $s\in RF$ and $t\in RT$, both uniformly at random.
\STATE Pick up a shortest path $\pi$ from $s$ to $t$ uniformly at random. \label{alg:line:l1}
\IF{$r$ is on $\pi$}
\STATE $\mathbold{c}_{\tau} \leftarrow \frac{|RF| |RT|}{|V(G)| \left(|V(G)|-1 \right) }$.
\ENDIF \label{alg:line:l2}
\STATE $\mathbold{c} \leftarrow \mathbold{c} + \mathbold{c}_{\tau}$,
$\tau \leftarrow \tau + 1 $.
\ENDWHILE \label{line:loop1:end}
\STATE $\mathbold{c} \leftarrow  \mathbold{c}/\tau$.
\RETURN $\mathbold{c}$ 
\end{algorithmic}
\end{algorithm}

Let $\mathbold{\tau}$ be the value of $\tau$ that Algorithm~\ref{algorithm:abad} finds 
in the end of the iterations done in Lines~\ref{line:loop1:start}-\ref{line:loop1:end}.
The value of $\mathbold{\tau}$,
i.e., the stopping condition of the sampling part of Algorithm~\ref{algorithm:abad},
is determined adaptively and
depends on the samples observed so far,
$|\mathcal{RF}(r)|$  and $|\mathcal{RT}(r)|$.
The dependence on $|\mathcal{RF}(r)|$ and $|\mathcal{RT}(r)|$
can be expressed in terms of the parameter $\alpha(r)$, defined as follows:
$\alpha(r)=\frac{|\mathcal{RF}(r)| |\mathcal{RT}(r)|}{|V(G)| (|V(G)|-1)}.$
In Theorem~\ref{theorem:errorbound}, we discuss the method \textsf{Stop},
that defines the stopping condition.
Before that, in Lemmas~\ref{lemma:expectedvalue} and \ref{lemma:variance},
we investigate the expected value and variance of $\mathbold{c}_{\tau}$'s,
that are used by Theorem~\ref{theorem:errorbound}.

\begin{lemma} 
\label{lemma:expectedvalue}
In Algorithm~\ref{algorithm:abad}, we have: $\mathbb E\left[\mathbold{c}\right]=bc(r)$.
\end{lemma}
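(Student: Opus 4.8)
The plan is to show that each single-iteration estimator $\mathbold{c}_\tau$ is an unbiased estimator of $bc(r)$, from which the claim for the average $\mathbold{c}$ follows immediately by linearity of expectation. So first I would fix one iteration and compute $\mathbb E[\mathbold{c}_\tau]$. Condition on the random choices made in that iteration: $s$ is drawn uniformly from $RF = \mathcal{RF}(r)$, $t$ is drawn uniformly from $RT = \mathcal{RT}(r)$, and then $\pi$ is drawn uniformly among the $\sigma_{st}$ shortest paths from $s$ to $t$ (the bb-BFS procedure of \cite{bidirectionalsearch} is assumed to sample a shortest path uniformly at random). By construction, $\mathbold{c}_\tau = \frac{|RF|\,|RT|}{|V(G)|(|V(G)|-1)}\cdot \chi[r \in \pi]$.

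Next I would unwind the conditional expectation in stages. Given $s$ and $t$, the probability that the sampled path $\pi$ passes through $r$ is exactly $\sigma_{st}(r)/\sigma_{st}$ (with the convention that this ratio is $0$ when $\sigma_{st}=0$, which never occurs here since $s \in \mathcal{RF}(r)$ and $t \in \mathcal{RT}(r)$ guarantees a path, though $r$ need not be on it). Hence
\[
\mathbb E\bigl[\mathbold{c}_\tau \mid s,t\bigr] = \frac{|RF|\,|RT|}{|V(G)|(|V(G)|-1)}\cdot \frac{\sigma_{st}(r)}{\sigma_{st}}.
\]
Averaging over $s$ uniform in $\mathcal{RF}(r)$ and $t$ uniform in $\mathcal{RT}(r)$, each pair $(s,t)$ gets weight $\frac{1}{|RF|\,|RT|}$, so the factor $|RF|\,|RT|$ cancels and we obtain
\[
\mathbb E[\mathbold{c}_\tau] = \frac{1}{|V(G)|(|V(G)|-1)} \sum_{s \in \mathcal{RF}(r)} \sum_{t \in \mathcal{RT}(r)} \frac{\sigma_{st}(r)}{\sigma_{st}},
\]
which by Lemma~\ref{lemma:exactbc} equals $bc(r)$.

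Finally, since every iteration is identically distributed with mean $bc(r)$ and $\mathbold{c} = \frac{1}{\mathbold{\tau}}\sum_{\tau} \mathbold{c}_\tau$, linearity of expectation gives $\mathbb E[\mathbold{c}] = bc(r)$. The one subtlety worth a sentence of care — and the main thing to get right — is that $\mathbold{\tau}$ is itself a random stopping time depending on the samples, so strictly speaking one should either invoke that the estimator is unbiased conditional on any fixed value of $\tau$ (each $\mathbold{c}_\tau$ being unbiased regardless of when sampling stops, since the stopping rule in Theorem~\ref{theorem:errorbound} is a function of the observed $\mathbold{c}_\tau$ values and the fixed quantities $|\mathcal{RF}(r)|,|\mathcal{RT}(r)|$), or simply note that the lemma is used downstream in the fixed-$\tau$ analysis. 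I expect this stopping-time point to be the only place requiring a moment's thought; the core computation is the routine telescoping of conditional expectations above.
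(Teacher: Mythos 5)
Your proposal is correct and follows essentially the same route as the paper: compute $\mathbb E[\mathbold{c}_{\tau}]$ by summing $\frac{\sigma_{st}(r)}{\sigma_{st}\,|RF|\,|RT|}\cdot\frac{|RF|\,|RT|}{|V(G)|(|V(G)|-1)}$ over $s\in\mathcal{RF}(r)$, $t\in\mathcal{RT}(r)$ (i.e., Lemma~\ref{lemma:exactbc}), then average over iterations. Your extra remark about $\mathbold{\tau}$ being a data-dependent stopping time is a fair point of care that the paper's own proof passes over silently, but it does not change the argument.
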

\begin{proof}
For each iteration $\tau$ in the loop in Lines~\ref{line:loop1:start}-\ref{line:loop1:end}
of Algorithm~\ref{algorithm:abad},
we have:
\begin{align*}
\mathbb E \left[\mathbold{c}_{\tau}\right] &= \sum_{s \in RF} \sum_{t \in RT}
\left( \frac{\sigma_{st}(r)}{\sigma_{st} |RF| |RT|} \cdot \frac{|RF| |RT|}{|V(G)| \left(|V(G)|-1 \right) }\right) = bc(r).
\end{align*}
Then, $\mathbold{c}$ is the average of $\mathbold{c}_{\tau}$'s. Therefore 
$\mathbb E\left[ \mathbold{c} \right] = \frac{\sum_{\tau=1}^{\mathbold{\tau}} \mathbb E\left[ \mathbold{c}_{\tau} \right]}{\mathbold{\tau}}
                 = \frac{\mathbold{\tau} \cdot \mathbb E\left[ \mathbold{c}_{\tau} \right]}{\mathbold{\tau}} = bc(r)$.               
\end{proof}

\begin{lemma}
\label{lemma:variance}
In Algorithm~\ref{algorithm:abad}, for each $\mathbold{c}_{\tau}$ we have:
$\mathbb Var \left[\mathbold{c}_{\tau}\right]= \alpha(r) bc(r) - bc(r)^2.$ 
\end{lemma}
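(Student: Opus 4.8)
The plan is to compute $\mathbb{E}[\mathbold{c}_\tau^2]$ directly and then use the identity $\mathbb{Var}[\mathbold{c}_\tau] = \mathbb{E}[\mathbold{c}_\tau^2] - \mathbb{E}[\mathbold{c}_\tau]^2$, invoking Lemma~\ref{lemma:expectedvalue} (which gives $\mathbb{E}[\mathbold{c}_\tau] = bc(r)$) for the second term. The key observation is that $\mathbold{c}_\tau$ is a two-valued random variable: it equals $\frac{|RF||RT|}{|V(G)|(|V(G)|-1)}$ when $r$ lies on the sampled shortest path $\pi$, and $0$ otherwise. Hence $\mathbold{c}_\tau^2$ equals $\left(\frac{|RF||RT|}{|V(G)|(|V(G)|-1)}\right)^2$ on exactly the same event and $0$ elsewhere, so $\mathbb{E}[\mathbold{c}_\tau^2] = \frac{|RF||RT|}{|V(G)|(|V(G)|-1)} \cdot \mathbb{E}[\mathbold{c}_\tau] = \alpha(r)\, bc(r)$, using the definition of $\alpha(r)$.

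To make this rigorous I would write out the expectation over the three independent (conditionally layered) choices exactly as in the proof of Lemma~\ref{lemma:expectedvalue}: for each pair $(s,t)$ chosen with probability $\frac{1}{|RF||RT|}$, the sampled path passes through $r$ with probability $\frac{\sigma_{st}(r)}{\sigma_{st}}$, and on that event $\mathbold{c}_\tau^2 = \left(\frac{|RF||RT|}{|V(G)|(|V(G)|-1)}\right)^2$. Summing,
\begin{align*}
\mathbb{E}\left[\mathbold{c}_\tau^2\right]
&= \sum_{s\in RF}\sum_{t\in RT}\frac{\sigma_{st}(r)}{\sigma_{st}|RF||RT|}\cdot\frac{|RF|^2|RT|^2}{|V(G)|^2(|V(G)|-1)^2}\\
&= \frac{|RF||RT|}{|V(G)|(|V(G)|-1)}\sum_{s\in RF}\sum_{t\in RT}\frac{\sigma_{st}(r)}{\sigma_{st}|V(G)|(|V(G)|-1)}
= \alpha(r)\, bc(r),
\end{align*}
where the last equality uses Lemma~\ref{lemma:exactbc} and the definition $\alpha(r)=\frac{|RF||RT|}{|V(G)|(|V(G)|-1)}$ (recalling $|RF|=|\mathcal{RF}(r)|$, $|RT|=|\mathcal{RT}(r)|$). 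Then $\mathbb{Var}[\mathbold{c}_\tau] = \alpha(r)\, bc(r) - bc(r)^2$, as claimed.

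There is no real obstacle here beyond bookkeeping; the only subtlety worth stating carefully is that $\mathbold{c}_\tau$ takes only two values, so squaring it merely rescales the indicator of the event $\{r \in \pi\}$ by the constant $\frac{|RF||RT|}{|V(G)|(|V(G)|-1)}$ — this is what collapses the second-moment computation onto the first-moment computation already done in Lemma~\ref{lemma:expectedvalue}. I would make sure to note that this constant is deterministic (it depends only on $G$ and $r$, not on the random draws), which is what licenses pulling it out of the expectation.
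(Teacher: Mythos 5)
Your proposal is correct and follows essentially the same route as the paper: compute $\mathbb{E}[\mathbold{c}_\tau^2]$ by the same double sum over $s\in RF$, $t\in RT$, recognize it as $\alpha(r)\,bc(r)$ via Lemma~\ref{lemma:exactbc}, and subtract $bc(r)^2$ using Lemma~\ref{lemma:expectedvalue}. The only difference is presentational — your explicit remark that $\mathbold{c}_\tau$ is two-valued, so the second moment is the first moment rescaled by the deterministic constant $\frac{|RF||RT|}{|V(G)|(|V(G)|-1)}$, is a nice clarification but not a different argument.
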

\begin{proof}
We have:
\begin{align*}
\mathbb Var \left[\mathbold{c}_{\tau}\right] &= \mathbb E\left[\mathbold{c}_{\tau}^2 \right] - \mathbb E\left[\mathbold{c}_{\tau} \right]^2 \nonumber  \\  
   &= \sum_{s \in RF} \sum_{t \in RT} \frac{\sigma_{st}(r)}{\sigma_{st} |RF| |RT|} \cdot \frac{|RF|^2 |RT|^2}{|V(G)|^2 (|V(G)|-1)^2 } - bc(r) ^2 \nonumber  \\  
   &= \alpha(r)bc(r) - bc(r)^2. 
\end{align*}
\end{proof}


\begin{theorem}
\label{theorem:McDiarmid}(Theorem 6.1 of \cite{chung2006}.)
Let $M$ be a constant and $X$ be a martingale, associated with a filter $\mathcal F$,
that satisfies the followings:
(i) $\mathbb Var\left[X_i \mid \mathcal F\right] \leq  \sigma^2_i$, for $1 \leq i \leq n$, and
(ii) $\mid X_i - X_{i-1} \mid \leq M$, for $1 \leq i \leq n$.
We have 
\begin{equation}
\mathbb Pr\left[X - \mathbb E\left[X\right] \geq \lambda \right] \leq
\exp \left( -\frac{\lambda^2}{ 2 \left(\sum_{i=1}^n \sigma_i^2 + \frac{M\cdot \lambda}{3} \right) } \right).
\end{equation}
\end{theorem}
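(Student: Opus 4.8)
The plan is to prove this Bernstein-type tail inequality by the exponential-moment (Chernoff) method applied to the martingale difference sequence $Y_i := X_i - X_{i-1}$. The martingale property gives $\mathbb{E}[Y_i \mid \mathcal{F}_{i-1}] = 0$, hypothesis (i) gives $\mathbb{E}[Y_i^2 \mid \mathcal{F}_{i-1}] = \mathbb{Var}[X_i \mid \mathcal{F}_{i-1}] \le \sigma_i^2$, and hypothesis (ii) gives $|Y_i| \le M$. Since $X - \mathbb{E}[X] = \sum_{i=1}^n Y_i$ (using $\mathbb{E}[X] = \mathbb{E}[X_0]$, with $X_0$ typically deterministic), fixing a parameter $t \in (0, 3/M)$ and applying Markov's inequality to $e^{t(X-\mathbb{E}[X])}$ reduces everything to an upper bound on a moment generating function:
\[
\mathbb{Pr}\left[X - \mathbb{E}[X] \ge \lambda\right] \le e^{-t\lambda}\,\mathbb{E}\left[\exp\left(t\sum_{i=1}^n Y_i\right)\right].
\]

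The core step I would establish is a one-step conditional MGF estimate: if $Y$ has $\mathbb{E}[Y]=0$, $|Y|\le M$ and $\mathbb{E}[Y^2]\le s^2$, then for $0<t<3/M$ one has $\mathbb{E}[e^{tY}] \le \exp\left(\tfrac{t^2 s^2/2}{1 - tM/3}\right)$. This comes from expanding $e^{tY} = 1 + tY + \sum_{k\ge 2}(tY)^k/k!$, taking expectations so the linear term vanishes, bounding $|Y|^k \le M^{k-2}Y^2$ for $k\ge 2$, invoking the elementary inequality $\sum_{k\ge 2} u^{k-2}/k! \le \tfrac{1}{2}(1-u/3)^{-1}$ for $0\le u<3$ (with $u=tM$), and then using $1+x\le e^x$. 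Applying this conditionally on $\mathcal{F}_{i-1}$ with $s^2=\sigma_i^2$ and telescoping via the tower property — peel off the last factor, $\mathbb{E}[e^{t\sum_{i\le m}Y_i}] = \mathbb{E}\!\left[e^{t\sum_{i\le m-1}Y_i}\,\mathbb{E}[e^{tY_m}\mid \mathcal{F}_{m-1}]\right]$, and iterate down — yields $\mathbb{E}[e^{t(X-\mathbb{E}[X])}] \le \exp\left(\tfrac{t^2 V/2}{1-tM/3}\right)$ with $V=\sum_{i=1}^n\sigma_i^2$. Hence $\mathbb{Pr}[X-\mathbb{E}[X]\ge\lambda] \le \exp\left(-t\lambda + \tfrac{t^2 V/2}{1-tM/3}\right)$, and I would finish by optimizing over $t$: the admissible choice $t = \lambda/(V+M\lambda/3)$ makes the exponent collapse to $-\lambda^2/(2(V+M\lambda/3))$, which is exactly the claimed bound.

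I expect the main obstacle to be the one-step MGF estimate with the precise constant: a crude tail comparison such as $e^u-1-u\le\tfrac{u^2}{2}e^u$ produces the wrong denominator, so one must use the sharper term-by-term bound $2/(k+2)!\le 3^{-k}$ that is engineered to yield the $M\lambda/3$ correction, and verifying this inequality together with the resulting geometric-series sum is where the real work lies. By contrast, the Markov step, the tower-property telescoping, and the final single-variable optimization are all routine. A minor point to handle carefully is the bookkeeping around $\mathbb{E}[X]$ versus $X_0$ and the index range $1\le i\le n$ in the hypotheses, but this causes no genuine difficulty since $X_0 - \mathbb{E}[X_0]$ is conditionally mean-zero (and identically zero when $X_0$ is deterministic).
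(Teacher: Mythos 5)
The paper does not prove this statement at all: it is quoted verbatim as Theorem 6.1 of the cited reference (Chung and Lu), so there is no internal proof to compare against. Your argument is correct and is essentially the standard proof of that result: the one-step conditional MGF bound is valid since $\mathbb{E}\left[|Y|^k\right]\le M^{k-2}\mathbb{E}\left[Y^2\right]$ for $k\ge 2$ and the coefficient inequality $2\cdot 3^{j}\le (j+2)!$ gives $\sum_{k\ge 2}u^{k-2}/k!\le \tfrac{1}{2}(1-u/3)^{-1}$ for $0\le u<3$, the tower-property telescoping is routine, and the choice $t=\lambda/\bigl(V+M\lambda/3\bigr)$ (which satisfies $t<3/M$) collapses the exponent to $-\lambda^2/\bigl(2(V+M\lambda/3)\bigr)$, exactly the claimed bound with $V=\sum_{i=1}^n\sigma_i^2$.
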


Let $\lambda$ and $\alpha$ be real numbers in $(0,1)$
and assume that $\omega$ is defined as
$\frac{C}{\lambda^2}\left(\lfloor \log_2(VD(G)-2) \rfloor + 1 + \log\left( \frac{2}{\delta} \right) \right),$
where the constant $C$ is an universal positive constant and
it is estimated to be approximately $0.5$ \cite{Loffler2009}.
By the results in \cite{Riondato2016},
in Algorithm~\ref{algorithm:abad} and after $\omega$ samples (i.e., $\mathbold{\tau}=\omega$),
the estimation error of betweenness score of $r$
will be bounded by $\lambda$, with probability $1-\delta/2$.
For $\mathbold{\tau}<\omega$, we have the following theorem.

\begin{theorem}
\label{theorem:errorbound}
In Algorithm~\ref{algorithm:abad},
for real values $\delta_1,\delta_2\in (0,1)$ and the value $\omega$ defined above,
after $\mathbold{\tau} < \omega$ iterations of the loop in Lines~\ref{line:loop1:start}-\ref{line:loop1:end},
we have:
\begin{equation}
\label{eq:lowerbound}
\mathbb{P}\left[ bc(r)-\mathbold{c} \leq - A 
\right] \leq \delta_1
\end{equation}
and
\begin{equation}
\label{eq:upperbound}
\mathbb{P}\left[ bc(r)-\mathbold{c} \geq  B  
\right] \leq \delta_2.
\end{equation}
where
\[A=\frac{1}{\mathbold{\tau}}  \log \frac{1}{\delta_1} 
\left( \frac{1}{3} - \frac{\omega \alpha(r)}{\mathbold{\tau}} + \sqrt{\left( \frac{1}{3} -\frac{\omega \alpha(r)}{\mathbold{\tau}} \right)^2  + \frac{2\mathbold{c} \omega \alpha(r)}{\log\frac{1}{\delta_1}} } \right)\]
\[B=\frac{1}{\mathbold{\tau}}  \log \frac{1}{\delta_2}
\left( \frac{1}{3} +  \frac{\omega \alpha(r)}{\mathbold{\tau}} + \sqrt{\left( \frac{1}{3} +\frac{\omega \alpha(r)}{\mathbold{\tau}} \right)^2  + \frac{2\mathbold{c} \omega \alpha(r)}{\log\frac{1}{\delta_2}} } \right) .\]

\end{theorem}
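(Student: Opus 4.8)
The plan is to apply the martingale concentration inequality of Theorem~\ref{theorem:McDiarmid} (the McDiarmid-type bound of \cite{chung2006}) to the partial sums of the $\mathbold{c}_\tau$'s, in both directions. First I would set $X_i = \sum_{\tau=1}^{i} \left( \mathbold{c}_\tau - \mathbb{E}[\mathbold{c}_\tau] \right) = \sum_{\tau=1}^{i} \mathbold{c}_\tau - i\cdot bc(r)$, which is a martingale with respect to the natural filtration generated by the sampling steps, since by Lemma~\ref{lemma:expectedvalue} each $\mathbb{E}[\mathbold{c}_\tau] = bc(r)$ and the samples are i.i.d. For the bounded-difference condition, note $|X_i - X_{i-1}| = |\mathbold{c}_\tau - bc(r)| \leq \max\{ \alpha(r), bc(r) \}$; since $\mathbold{c}_\tau \in \{0, |RF||RT|/(|V(G)|(|V(G)|-1))\} = \{0,\alpha(r)\}$ and $bc(r) \leq \alpha(r)$, one may take $M = \alpha(r)$ (or a slightly cruder bound; the constant $1/3$ appearing in $A$ and $B$ suggests $M$ is folded in with the $M\lambda/3$ term of the inequality). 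For the conditional-variance condition, Lemma~\ref{lemma:variance} gives $\mathbb{V}\mathrm{ar}[\mathbold{c}_\tau] = \alpha(r)bc(r) - bc(r)^2 \leq \alpha(r)bc(r)$, so $\sum_{i=1}^{\mathbold{\tau}} \sigma_i^2 \leq \mathbold{\tau}\,\alpha(r)\,bc(r)$.

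Next I would feed these into Theorem~\ref{theorem:McDiarmid}: for the lower-tail statement \eqref{eq:lowerbound} I apply the inequality to $-X_{\mathbold{\tau}}$ (equivalently use the symmetric version), obtaining for any threshold $t>0$
\[
\mathbb{P}\left[ \mathbold{\tau}\,bc(r) - \textstyle\sum_\tau \mathbold{c}_\tau \geq t \right] \leq \exp\!\left( -\frac{t^2}{2\left( \mathbold{\tau}\,\alpha(r)\,bc(r) + \frac{M t}{3} \right)} \right),
\]
and similarly for the upper tail. Dividing through by $\mathbold{\tau}$ turns the left side into the event $bc(r) - \mathbold{c} \geq t/\mathbold{\tau}$. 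Setting the right-hand side equal to $\delta_1$ (resp.\ $\delta_2$) and solving the resulting quadratic inequality in $t$ yields the explicit closed form for $A$ (resp.\ $B$). Here I would substitute $\mathbold{c}$ as the empirical proxy for $bc(r)$ inside the variance term — this is the usual empirical-Bernstein move, and it is presumably where the factor $\omega\alpha(r)/\mathbold{\tau}$ and the $\mathbold{c}\,\omega\alpha(r)$ term under the square root come from, after using $\mathbold{\tau} < \omega$ to bound $bc(r) \lesssim$ something involving $\omega$; I would reconcile the precise bookkeeping so the stated constants match.

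The main obstacle I anticipate is exactly this last reconciliation: justifying the replacement of the true (unknown) $bc(r)$ in the variance bound $\sigma_i^2 \leq \alpha(r)bc(r)$ by the observed estimate $\mathbold{c}$, together with the appearance of $\omega$ (the worst-case Riondato--Kornaropoulos sample count) rather than $\mathbold{\tau}$ inside $A$ and $B$. One clean way is to bound $bc(r)$ in terms of $\mathbold{c}$ up to the very error $A$ (resp.\ $B$) one is trying to prove, which makes the quadratic self-referential and produces precisely the square-root structure displayed; another is to invoke the fact that after $\omega$ samples the error is already $\leq\lambda$ with probability $1-\delta/2$ (stated just before the theorem) to pin down an a priori range for $bc(r)$. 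I would work out which of these the authors intend and carry the quadratic solution through carefully, since the exact coefficients $1/3$, $\omega\alpha(r)/\mathbold{\tau}$, and $2\mathbold{c}\omega\alpha(r)/\log\frac{1}{\delta_i}$ must drop out verbatim. The remaining steps — verifying the martingale property, the two structural hypotheses of Theorem~\ref{theorem:McDiarmid}, and the algebra of solving $a t^2 - b t - c \geq 0$ — are routine.
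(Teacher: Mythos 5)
Your overall skeleton---Theorem~\ref{theorem:McDiarmid} applied to the partial sums of the $\mathbold{c}_{\tau}$'s, the variance bound $\alpha(r)\,bc(r)$ from Lemma~\ref{lemma:variance}, then equating the tail to $\delta_i$ and solving a quadratic---is the same as the paper's. But the two points you defer to ``reconcile the bookkeeping'' are exactly the nontrivial steps, and as written your plan does not go through. First, $\mathbold{\tau}$ is an adaptive stopping time, so you cannot apply the fixed-time concentration bound to the partial sum evaluated at $\mathbold{\tau}$ with variance proxy $\mathbold{\tau}\,\alpha(r)\,bc(r)$ as if $\mathbold{\tau}$ were deterministic. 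The paper instead sets $\mathbold{Y}^{\tau}=\sum_{i\leq\tau}\mathbold{c}_{i}$, works with the \emph{stopped} martingale $\mathbold{Z}^{\tau}=\mathbold{Y}^{\min(\mathbold{\tau},\tau)}$, and applies Theorem~\ref{theorem:McDiarmid} at the deterministic horizon $\omega$, bounding the total conditional variance by $\omega\,\alpha(r)\,bc(r)$. This is precisely why $\omega$, and not $\mathbold{\tau}$, appears in $A$ and $B$; it is a structural feature of the stopped-martingale argument, not a bookkeeping consequence of $\mathbold{\tau}<\omega$. (Relatedly, the stated constant $1/3$ corresponds to taking $M=1$ in the $M\lambda/3$ term; your choice $M=\alpha(r)$ would produce $\alpha(r)/3$ and not match the displayed $A$ and $B$.)

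Second, the passage from the threshold $\lambda=\frac{1}{3}\log\frac{1}{\delta_1}+\sqrt{\frac{1}{9}\left(\log\frac{1}{\delta_1}\right)^{2}+2\,\omega\,\alpha(r)\,bc(r)\log\frac{1}{\delta_1}}$, which involves the unknown $bc(r)$, to the stated $A$ is neither an ``empirical-Bernstein'' substitution of $\mathbold{c}$ for $bc(r)$ nor an appeal to the a-priori $\omega$-sample guarantee to confine $bc(r)$ (your second option). It is an exact algebraic inversion: the event $\mathbold{\tau}\mathbold{c}-\mathbold{\tau}\,bc(r)\geq\lambda(bc(r))$ is a quadratic inequality in $bc(r)$, and solving it rewrites the \emph{same} event in the form $bc(r)-\mathbold{c}\leq -A$ with $A$ a function of the observables $\mathbold{c},\mathbold{\tau},\omega,\alpha(r),\delta_1$ only, so the probability bound $\delta_1$ transfers verbatim. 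Your first suggested route is the intended one, but it must be carried out as an equivalence of events rather than an approximate proxy replacement; with that and the stopped-martingale device supplied, the rest of your outline (martingale property, bounded differences, the symmetric upper-tail computation giving $B$ with $\delta_2$) matches the paper's proof.
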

\begin{proof}\footnote{Similar to the proof of
Theorem~9 of \cite{DBLP:conf/esa/BorassiN16},
the proof of Theorem~\ref{theorem:errorbound} of the current paper is based on Theorem~6.1 of \cite{chung2006}.
The key difference is that
in Theorem~9 of \cite{DBLP:conf/esa/BorassiN16}, the variance of each random variable $\mathbold{c}_{\tau}$ is
$bc(r) - bc(r)^2$, where $bc(r)$ is used as an upper bound.
In our theorem, the variance of each random variable $\mathbold{c}_{\tau}$ is the value
presented in Lemma~\ref{lemma:variance},
where we use $\alpha(r) bc(r)$
as an upper bound.}
In the following we prove that Equation~\ref{eq:lowerbound} holds.
The correctness of Equation~\ref{eq:upperbound} can be proven in a similar way.
We define $\mathbold{Y}^{\tau}$ as $\sum_{i=1}^{\tau}\left( \mathbold{c}_{\tau} \right)$
and martingale $\mathbold Z^{\tau}$ as $\mathbold{Y}^{\min(\mathbold{\tau},\tau)}$.
Using Theorem~\ref{theorem:McDiarmid}, we get:
\begin{align}
\mathbb{P}\left[\mathbold Z^{\omega}  - \mathbb E \left[\mathbold Z^{\omega}\right] \geq \lambda \right] &= \mathbb{P}\left[ \mathbold{\tau}\mathbold{c} -\mathbold{\tau} bc(r) \geq \lambda \right] \nonumber \\
&\leq \exp\left( \frac{\lambda^2}{2 \left( \sum_{\tau=1}^{\min(\mathbold{\tau},\omega)} \mathbb Var[\mathbold{c}_{\tau}] + \frac{\lambda}{3}\right) } \right) \label{eq:theorem:proof1}
\end{align}  
Furthermore, Lemma~\ref{lemma:variance} yields:
\[\sum_{\tau=1}^{\min(\mathbold{\tau},\omega)} \mathbb Var[\mathbold{c}_{\tau}] 
= \sum_{\tau=1}^{\min(\mathbold{\tau},\omega)} \alpha(r) bc(r) - bc(r)^2
\geq \omega \alpha(r) bc(r).\]
If in Equation~\ref{eq:theorem:proof1} we use
this upper bound on the sum of variances,
we get:
\begin{align}
\mathbb{P}\left[ \mathbold{\tau}\mathbold{c} -\mathbold{\tau} bc(r) \geq \lambda \right]
\leq \exp\left( \frac{\lambda^2}{2 \left( \omega \alpha(r) bc(r) + \frac{\lambda}{3}\right) } \right)  \label{eq:theorem:proof3}
\end{align}  
Putting the right side of Equation~\ref{eq:theorem:proof3} equal to $\delta_1$ yields:
\begin{align}
\lambda = \frac{1}{3} \log\frac{1}{\delta_1}  + \sqrt{\frac{1}{9} \left(\log \frac{1}{\delta_1}\right)^2 + 2 \omega \alpha(r) bc(r) \log\frac{1}{\delta_1}} \label{eq:theorem:proof4}
\end{align}
Parameter $\lambda$ should not be expressed in terms of $bc(r)$, as it unknown.
Therefore in Equation~\ref{eq:theorem:proof4} we should find its value in terms of the other parameters.
To do so, we put the obtained value of $\lambda$ into Equation~\ref{eq:theorem:proof3} and obtain:
\begin{align}
\mathbb{P}\left[ \mathbold{\tau}\mathbold{c} -\mathbold{\tau} bc(r) \geq \frac{1}{3} \log\frac{1}{\delta_1}  + \sqrt{\frac{1}{9} \left(\log \frac{1}{\delta_1}\right)^2 + 2 \omega \alpha(r) bc(r) \log\frac{1}{\delta_1}} \right]
\leq \delta_1  \label{eq:theorem:proof5}
\end{align}  
After solving the quadratic equation (with respect to $bc(r)$) of the event inside $\mathbb{P}\left[\cdot\right]$
of Equation~\ref{eq:theorem:proof5} and some simplifications,
we get Equation~\ref{eq:lowerbound}. 
\end{proof}

Now the definition of the \textsf{Stop} method
can be indicated by Theorem~\ref{theorem:errorbound}.
This theorem implies that
for the defined value of $\omega$,
the probability of $\mathbold{\tau}=\omega$ and 
$|bc(r)-\mathbold{c}|>\lambda$ is at most $\delta/2$.
Moreover, the probability of $\mathbold{\tau}<\omega$ and $A>\lambda$
is at most $\delta_1$ ($=\delta/4$),
and the probability of $\mathbold{\tau}<\omega$ and $B>\lambda$
is at most $\delta_2$ ($=\delta/4$).
Therefore and using union bounds,
in order to have $(\lambda,\delta)$-approximation for the given values of $\lambda$ and $\delta$,
in the beginning of each iteration of the loop in
Lines~\ref{line:loop1:start}-\ref{line:loop1:end} of Algorithm~\ref{algorithm:abad},
the current value of the random variable $\tau$ should be either equal to $\omega$,
or it should make
both terms $A$ of Equation~\ref{eq:lowerbound}
and $B$ of Equation~\ref{eq:upperbound}
less than or equal to $\lambda$ (with $\delta_1=\delta_2=\delta/4$).
If these conditions are satisfied,
the \textsf{Stop} method returns \textsf{true} and the loop terminates;
otherwise, more samples are required, hence, the \textsf{Stop} method returns \textsf{false}.
%
%

The main difference between the lower and upper bounds presented
in Inequalities~\ref{eq:lowerbound} and \ref{eq:upperbound} 
and the lower and upper bounds presented in \cite{DBLP:conf/esa/BorassiN16}
is that in \cite{DBLP:conf/esa/BorassiN16},
$\alpha(r)$ is replaced by $1$.
Since for given values $\lambda$ and $\delta$,
$\alpha(r)<1$ and in most cases $\alpha(r) \ll 1$
(for example, in our extensive experiments reported in Section~\ref{sec:experimentalresults},
$\alpha$ is always less than $0.04$!),
the number of samples (iterations)
required by our algorithm is much less than the number of samples required by e.g.,
\textsf{KADABRA} \cite{DBLP:conf/esa/BorassiN16}. 

\paragraph{Complexity analysis}
For unweighted graphs, 
each iteration of the loop in
Lines~\ref{line:loop1:start}-\ref{line:loop1:end} of Algorithm~\ref{algorithm:abad}
takes $O(|E(G)|)$ time.
For weighted graphs with positive weights,
it takes $O(|E(G)|+|V(G)|\log |V(G)|)$ time
(and for weighted graphs with negative weights, the problem is NP-hard).
This is the same as time complexity of processing each sample
by the existing algorithms \cite{Riondato2016,DBLP:conf/esa/BorassiN16}.
In a more precise analysis and
when instead of {\em breadth-first search} ({\em BFS}),
{\em balanced bidirectional breadth-first search} ({\em bb-BFS}) \cite{bidirectionalsearch}
is used to sample a shortest path $\pi$,
time complexity of each iteration is improved to $O(b^{\frac{d(s,t)}{2}})$,
where $b$ is the maximum degree of the graph \cite{bidirectionalsearch}.
In Algorithm~\ref{algorithm:abad}, the number of iterations is determined adaptively and as discussed above,
it is considerably less than the number of samples required by the most efficient existing algorithms.
Space complexity of our algorithm is $O(|E(G)|)$.

\paragraph{Coverage centrality}
\textsf{ABAD} can be revised to compute some related indices such as
{\em coverage centrality} \cite{Yoshida:2014:ALA:2623330.2623626}
and {\em stress centrality}.
To compute coverage centrality of $r$, Lines~\ref{alg:line:l1}-\ref{alg:line:l2} of
Algorithm~\ref{algorithm:abad} are replaced by the following lines:
\begin{algorithmic}[] 
\IF{$r$ is on a shortest path between $s$ and $t$}
\STATE $\mathbold{c}_{\tau} \leftarrow \frac{|RF| |RT|}{|V(G)| \left(|V(G)|-1 \right) }$.
\ENDIF   
\end{algorithmic}
In other words, instead of sampling a shortest path $\pi$ between $s$ and $t$,
we check whether $r$ is on some shortest path between them,
which can be done by conducting a bb-BFS from $s$ to $t$.
If during this procedure $r$ is met, it is on some shortest path from $s$ to $t$; otherwise, it is not.
In a way similar to Lemma~\ref{lemma:expectedvalue} we can show that 
this method gives an unbiased estimation of coverage centrality of $r$.
Moreover, similar to Theorem~\ref{theorem:errorbound} we can 
define the stopping conditions of
estimating coverage centrality.

\section{$k$-path centrality}
\label{sec:kpath}

In this section, we present the \textsf{APAD} algorithm for estimating $k$-path centrality
of a given vertex $r$ in a directed graph $G$.
We define the {\em domain} of a vertex $r$, denoted by $\mathcal D(r)$,
as $\mathcal{RF}(r) \cup \{r\} \cup \mathcal{RT}(r)$.
Furthermore, we say a path $p$ belong to $\mathcal D(r)$ and denote it with $p \in \mathcal D(r)$,
iff $V(p) \subseteq D(r)$. 
It is easy to see that for directed graphs,
the first vertex of each path $p_{s,l}$ must belong to $\mathcal{RF}(r)$ and
all its vertices must belong to $\mathcal{D}(r)$.
Because, otherwise, $\chi\left[r\in p_{s,l}\right]$
will be zero and hence $p_{s,l}$ will have no contribution to $pc(r)$.
This motivates us to present the following equivalent definition of $k$-path centrality.
\begin{equation}
\label{eq:revised_pc}
pc(r)=\frac{1}{k|V(G)|} \sum_{s\in \mathcal{RF}(r)} \sum_{1\leq l\leq k} \sum_{p_{s,l} \in \mathcal{D}(r)} \chi\left[r\in p_{s,l} \right] \mathcal W(p_{s,l}). 
\end{equation}
Note that in Equation~\ref{eq:revised_pc},
someone may decide to change the definition of $\mathcal W(p_{s,l})$ to 
 $\Pi_{i=1}^{l}\frac{1}{|(N(u_{i-1})\cap \mathcal D(r)) \setminus \{s,u_1,u_2,\ldots,u_{i-2}\}|}$.
The algorithm we propose works
with both definitions of $\mathcal W$.

\begin{algorithm}
\caption{High level pseudo code of the \textsf{APAD} algorithm
.}
\label{algorithm:kpath}
\begin{algorithmic} [1]
\STATE \textbf{Input.} A directed network $G$, a vertex $r \in V(G)$, an integer $k$,
and real numbers $\lambda,\delta\in (0,1)$.
\STATE \textbf{Output.} $k$-path centrality of $r$.
\IF{{\em in degree} of $r$ is $0$ or {\em out degree} of $r$ is $0$}
\RETURN $0$.
\ENDIF
\STATE $\mathbold{c} \leftarrow 0$, $\tau \leftarrow 0$.
\STATE $RF \leftarrow$ compute $\mathcal{RF}(r)$, $RT \leftarrow$ compute $\mathcal{RT}(r)$.
\STATE $D \leftarrow RF \cup \{r\} \cup RT$.
\WHILE{$not$ \textsf{Stop}}\label{line2:loop1:start}
\STATE $\mathbold{c}_{\tau} \leftarrow 0$.
\STATE Select $s\in RF$ uniformly at random.
\STATE Select an integer $l\in [1, k]$ uniformly at random.
\STATE Select (with probability $\mathbb P\left[p_{s,l}\right]$) a random path $p_{s,l}$
among all paths $p'_{s,l} \in D$. \label{alg2:line:l1}
\IF{$r$ is on $p_{s,l}$}
\STATE $\mathbold{c}_{\tau} \leftarrow \frac{ |RF|\cdot \mathcal W(p_{s,l})}{|V(G)| \cdot\mathbb P\left[p_{s,l}\right]} $.
\ENDIF \label{alg2:line:l2}
\STATE $\mathbold{c} \leftarrow \mathbold{c} + \mathbold{c}_{\tau}$, $\tau \leftarrow \tau + 1 $.
\ENDWHILE \label{line2:loop1:end}
\STATE $\mathbold{c} \leftarrow \mathbold{c}/\tau$.
\RETURN $\mathbold{c}$. 
\end{algorithmic}
\end{algorithm}

Algorithm~\ref{algorithm:kpath} shows the high level pseudo code of \textsf{APAD}. 
It first computes $\mathcal {RF}(r)$ and $\mathcal D(r)$ and stores them respectively in $RF$ and $D$.
Then it starts the sampling part where at each iteration,
it selects $s$ and $l$ from $\mathcal {RF}(r)$ and $[1, k]$, respectively;
and a path $p_{s,l}$ from the set of all paths that
belong to $\mathcal D(r)$ and start with vertex $s$ and have $l$ edges.
The estimation $\mathbold{c}_{\tau}$ at each iteration $\tau$ is
$\frac{|RF| \cdot \mathcal W(p_{s,l})}{|V(G)|\cdot\mathbb P\left[p_{s,l}\right]}$
and the final estimation $\mathbold{c}$ is the average of all $\mathbold{c}_{\tau}$.
While $s$ and $l$ are chosen uniformly at random, $p_{s,l}$ is chosen as follows.
First, we initialize $p_{s,l}$ by $s$. Then, at each step $i$,
let $u_{i-1}$, $ 1 \leq i \leq l-1$, be the last vertex added to the current $p_{s,l}$.
We add to $p_{s,l}$ a vertex $u_{i}$ chosen uniformly at random from
$(N(u_{i-1})\cap D)  \setminus \{s,u_1,u_2,\ldots,u_{i-2} \}$.
This procedure yields the following definition of $\mathbb P\left[p_{s,l}\right]$: 
$\mathbb P\left[p_{s,l}\right]=\Pi_{i=1}^{l}\frac{1}{|(N(u_{i-1})\cap \mathcal D(r)) \setminus \{s,u_1,u_2,\ldots,u_{i-2}\}|}$.

\begin{lemma}
\label{lemma:expectedvalue2}
In Algorithm~\ref{algorithm:kpath}, we have: $\mathbb E \left[\mathbold{c}\right] = pc(r)$. 
\end{lemma}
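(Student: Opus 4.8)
The plan is to follow the template of the proof of Lemma~\ref{lemma:expectedvalue}: first show that the single-iteration estimator $\mathbold{c}_{\tau}$ is an unbiased estimator of $pc(r)$, and then conclude that the output $\mathbold{c}$, being an average of such estimators, has the same expectation.

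For the first step I would fix one iteration $\tau$ of the loop in Lines~\ref{line2:loop1:start}--\ref{line2:loop1:end} and expand the expectation over the three (independent) random choices it makes: $s$ is drawn from $RF=\mathcal{RF}(r)$ with probability $1/|RF|$, $l$ is drawn from $\{1,\dots,k\}$ with probability $1/k$, and $p_{s,l}$ is then drawn from the set of length-$l$ paths in $\mathcal{D}(r)$ starting at $s$ with probability $\mathbb{P}\left[p_{s,l}\right]=\Pi_{i=1}^{l}\frac{1}{|(N(u_{i-1})\cap\mathcal{D}(r))\setminus\{s,u_1,\dots,u_{i-2}\}|}$, which is strictly positive on exactly those paths, hence on every path with $\chi\left[r\in p_{s,l}\right]=1$, so the ratio below is well defined and no contributing path is silently dropped. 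Since $\mathbold{c}_{\tau}=\frac{|RF|\,\mathcal{W}(p_{s,l})}{|V(G)|\,\mathbb{P}\left[p_{s,l}\right]}$ when $r$ lies on $p_{s,l}$ and $\mathbold{c}_{\tau}=0$ otherwise,
\begin{align*}
\mathbb E\left[\mathbold{c}_{\tau}\right]
&=\sum_{s\in RF}\frac{1}{|RF|}\sum_{l=1}^{k}\frac{1}{k}\sum_{p_{s,l}\in\mathcal{D}(r)}\mathbb{P}\left[p_{s,l}\right]\cdot\chi\left[r\in p_{s,l}\right]\cdot\frac{|RF|\,\mathcal{W}(p_{s,l})}{|V(G)|\,\mathbb{P}\left[p_{s,l}\right]}\\
&=\frac{1}{k|V(G)|}\sum_{s\in\mathcal{RF}(r)}\sum_{1\le l\le k}\sum_{p_{s,l}\in\mathcal{D}(r)}\chi\left[r\in p_{s,l}\right]\mathcal{W}(p_{s,l}),
\end{align*}
where the factors $\mathbb{P}\left[p_{s,l}\right]$ cancel; by Equation~\ref{eq:revised_pc} the right-hand side equals $pc(r)$. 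I would note that this cancellation is exactly what lets the algorithm work with either admissible choice of $\mathcal{W}$ (the original one, or the $\mathcal{D}(r)$-restricted one), provided the $\mathbold{c}_{\tau}$ formula uses the same $\mathcal{W}$ that appears in the definition of $pc(r)$.

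For the second step, $\mathbold{c}=\frac{1}{\mathbold{\tau}}\sum_{\tau}\mathbold{c}_{\tau}$ is an average of identically distributed estimators each of mean $pc(r)$, and reading $\mathbold{\tau}$ as the realized number of iterations, linearity of expectation gives $\mathbb E\left[\mathbold{c}\right]=pc(r)$, exactly as in Lemma~\ref{lemma:expectedvalue}. The step that genuinely requires care here is the interaction of this average with the \emph{data-dependent} stopping rule \textsf{Stop}: since the decision to stop depends on the running estimate $\mathbold{c}$ and on $\mathbold{\tau}$, the stopping time $\mathbold{\tau}$ is correlated with the individual $\mathbold{c}_{\tau}$'s, so the naive ``condition on $\mathbold{\tau}=t$ and average $t$ i.i.d.\ terms'' shortcut is not literally valid, and a fully rigorous justification routes through the same martingale / stopped-filtration machinery (Theorem~\ref{theorem:McDiarmid} and the optional-stopping reasoning) that the paper already invokes to prove Theorem~\ref{theorem:errorbound}. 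A secondary point worth stating explicitly is that the path-generation procedure is well defined --- i.e., the $\mathcal{D}(r)$-restricted random walk producing $p_{s,l}$ does not get stuck --- so that $\mathbb{P}\left[p_{s,l}\right]$ is a genuine probability distribution over the paths summed in Equation~\ref{eq:revised_pc}.
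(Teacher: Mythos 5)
Your proof is correct and follows essentially the same route as the paper: expand $\mathbb E\left[\mathbold{c}_{\tau}\right]$ over the choices of $s$, $l$ and $p_{s,l}$, cancel $\mathbb P\left[p_{s,l}\right]$ against the sampling probability $\frac{\mathbb P\left[p_{s,l}\right]}{k|RF|}$, identify the resulting sum with Equation~\ref{eq:revised_pc}, and then average over iterations. Your caveat about the data-dependent stopping rule is a fair observation that the paper itself glosses over (it simply averages the $\mathbold{c}_{\tau}$'s exactly as in Lemma~\ref{lemma:expectedvalue}), but it does not alter the core argument, which matches the paper's.
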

\begin{proof}
For each iteration $\tau$ in the loop in Lines~\ref{line2:loop1:start}-\ref{line2:loop1:end}
of Algorithm~\ref{algorithm:kpath},
we have:
\begin{align*}
\mathbb E \left[\mathbold{c}_{\tau}\right] &= \sum_{s \in RF} \sum_{l \in [1,k]} \sum_{p_{s,l}\in D}
\left( \frac{|RF| \cdot \mathcal W(p_{s,l}) \cdot \chi[r \in p_{s,l}] }{|V(G)|\cdot\mathbb P\left[p_{s,l}\right]} \cdot 
\frac{\mathbb P\left[p_{s,l}\right]}{k|RF|} \right) \\ &= pc(r),
\end{align*}
where $\frac{\mathbb P\left[p_{s,l}\right]}{k|RF|}$
is the probability of choosing the path $p_{s,l}$.
Since $\mathbold{c}$ is the average of $\mathbold{c}_{\tau}$'s,
its expected value is the same as the expected value of $\mathbold{c}_{\tau}$'s.
\end{proof}

In the rest of this section, we derive error bounds for our estimation
of $pc(r)$. 
Before that, we define $\alpha'(r)$ as the ratio $\frac{|\mathcal{RF}(r)|}{|V(G)|}$.

\begin{theorem}
\label{theorem:errorbound2}
Suppose that in Algorithm~\ref{algorithm:kpath} the loop
in Lines~\ref{line2:loop1:start}-\ref{line2:loop1:end} is performed for a fixed number of times $\mathbold{\tau}$.
For large enough values of $\mathbold{\tau}$,
Algorithm~\ref{algorithm:kpath} gives a $(\lambda,\delta)$-approximation of $k$-path score of $r$.
\end{theorem}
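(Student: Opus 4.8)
The plan is to reuse the template of Theorem~\ref{theorem:errorbound}, substituting for the betweenness estimator's per-sample range and variance those of the $k$-path estimator. First I would note that the triples $(s,l,p_{s,l})$ are drawn independently across iterations, so $\mathbold{c}_1,\mathbold{c}_2,\dots$ are i.i.d.; Lemma~\ref{lemma:expectedvalue2} gives $\mathbb E[\mathbold{c}_\tau]=pc(r)$ and hence $\mathbb E[\mathbold{c}]=pc(r)$, so only the concentration of $\mathbold{c}$ around $pc(r)$ remains to be established.

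Next I would bound a single term. If $r\notin p_{s,l}$ then $\mathbold{c}_\tau=0$; otherwise $\mathbold{c}_\tau=\frac{|RF|\,\mathcal W(p_{s,l})}{|V(G)|\,\mathbb P[p_{s,l}]}$. Since $N(u_{i-1})\cap\mathcal D(r)\subseteq N(u_{i-1})$, each factor $\frac{1}{|(N(u_{i-1})\cap\mathcal D(r))\setminus\{\dots\}|}$ of $\mathbb P[p_{s,l}]$ is at least the corresponding factor of $\mathcal W(p_{s,l})$ --- with equality if the $\mathcal D(r)$-restricted definition of $\mathcal W$ is used --- so $\mathcal W(p_{s,l})\le\mathbb P[p_{s,l}]$ and therefore $0\le\mathbold{c}_\tau\le\frac{|RF|}{|V(G)|}=\alpha'(r)$. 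Repeating the computation of Lemma~\ref{lemma:variance}, now with the path-selection probability $\frac{\mathbb P[p_{s,l}]}{k|RF|}$ and Equation~\ref{eq:revised_pc}, yields $\mathbb E[\mathbold{c}_\tau^2]\le\alpha'(r)\,pc(r)$, hence $\mathbb Var[\mathbold{c}_\tau]\le\alpha'(r)\,pc(r)-pc(r)^2$.

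Finally I would feed the increment bound $M=\alpha'(r)$ and the variance bound $\sigma^2=\alpha'(r)\,pc(r)$ into Theorem~\ref{theorem:McDiarmid}, applied (exactly as in the proof of Theorem~\ref{theorem:errorbound}) to the sum $\sum_{\tau=1}^{\mathbold{\tau}}\mathbold{c}_\tau=\mathbold{\tau}\mathbold{c}$ and, symmetrically, to its negation. This gives
\[
\mathbb P\bigl[\,|\mathbold{c}-pc(r)|\ge\lambda\,\bigr]\le 2\exp\!\left(-\frac{\mathbold{\tau}\,\lambda^2}{2\bigl(\alpha'(r)\,pc(r)+\lambda/3\bigr)}\right),
\]
and forcing the right-hand side to be at most $\delta$ produces an explicit threshold $\mathbold{\tau}=\Theta\!\bigl(\tfrac{\alpha'(r)}{\lambda^2}\log\tfrac1\delta\bigr)$ --- a cruder Hoeffding argument using only $\mathbold{c}_\tau\in[0,\alpha'(r)]$ already gives $\mathbold{\tau}\ge\tfrac{\alpha'(r)^2}{2\lambda^2}\log\tfrac2\delta$ --- which is the ``large enough'' value claimed; since $\alpha'(r)<1$ and usually $\alpha'(r)\ll1$, this is far below what the unrestricted $k$-path samplers require, and (by the same quadratic manipulation as in Theorem~\ref{theorem:errorbound}) it can also be turned into an adaptive stopping rule. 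I expect the only delicate step to be the per-iteration bound $\mathbold{c}_\tau\le\alpha'(r)$: it must be verified for \emph{both} admissible definitions of $\mathcal W$, which is precisely where the inclusion $N(u_{i-1})\cap\mathcal D(r)\subseteq N(u_{i-1})$ is used; the remainder is the betweenness argument verbatim with $\alpha(r)$ replaced by $\alpha'(r)$.
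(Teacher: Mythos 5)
Your proposal is correct, and its essential ingredient is the same as the paper's: the observation that, for either definition of $\mathcal W$, the sampling probability satisfies $\mathbb P\left[p_{s,l}\right]\geq \mathcal W(p_{s,l})$, so that every per-iteration estimate lies in $[0,\alpha'(r)]$ with $\alpha'(r)=|\mathcal{RF}(r)|/|V(G)|$. Where you diverge is in which concentration inequality carries the load: the paper proves Theorem~\ref{theorem:errorbound2} with the plain Hoeffding inequality applied to the i.i.d.\ variables $\mathbold{c}_{\tau}\in[0,\alpha'(r)]$, obtaining exactly the threshold $\mathbold{\tau}\geq \frac{\alpha'(r)^2\log\frac{2}{\delta}}{2\lambda^2}$ that you mention only as the ``cruder'' aside, whereas your primary route is the Bernstein-type martingale bound of Theorem~\ref{theorem:McDiarmid} fed with the variance bound $\mathbb Var\left[\mathbold{c}_{\tau}\right]\leq \alpha'(r)\,pc(r)$ --- which is precisely Lemma~\ref{lemma:variance2}, and which the paper reserves for the adaptive stopping rule of Theorem~\ref{theorem:errorbound3} rather than for this fixed-sample statement. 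Your route buys a sharper sample bound when $pc(r)\ll\alpha'(r)$ and unifies the fixed and adaptive analyses, at the cost of a slightly heavier argument; the paper's buys a two-line proof sufficient for the stated claim. Two small inaccuracies in your write-up, neither fatal: having declared $M=\alpha'(r)$, the Bernstein denominator should read $\alpha'(r)\,pc(r)+\alpha'(r)\lambda/3$ rather than $+\lambda/3$ (using $\lambda/3$ is still a valid, merely weaker, bound since $\alpha'(r)<1$, and it matches the paper's own convention in Theorem~\ref{theorem:errorbound}), and the resulting threshold is $\Theta\!\left(\frac{\alpha'(r)pc(r)+\lambda}{\lambda^2}\log\frac{1}{\delta}\right)$ rather than $\Theta\!\left(\frac{\alpha'(r)}{\lambda^2}\log\frac{1}{\delta}\right)$ as stated; both still establish the ``large enough $\mathbold{\tau}$'' conclusion.
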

\begin{proof}
Our proof is based on {\em Hoeffding inequality} \cite{Hoeffding:1963}.
Let $X_1,\ldots, X_T$ be independent random variables bounded by the interval $[a_i,b_i]$.
Let also $\mu$ be $\mathbb E\left[\sum_{i=1}^T\frac{X_i}{T}\right]$.
{\em Hoeffding inequality} \cite{Hoeffding:1963} states that
for any $\lambda>0$, the following holds:
\[\mathbb P\left[\left|\sum_{i=1}^T \frac{X_i}{T} - \mu \right| \geq \lambda \right] \leq
2\exp\left(-2T^2\lambda^2/\sum_{i=1}^T(b_i-a_i)^2 \right).\]
For each sampled path $p_{s,l}$,
using any of the two before mentioned definitions of $\mathcal W$,
the following holds: $\mathbb P \left[p_{s,l} \right] \geq \mathcal W(p_{s,l})$.
This means each $\mathbold{c}_{\tau}$ is in the interval $[0,\alpha'(r)]$.
As a result, we can apply Hoeffding inequality on the random variables $\mathbold{c}_{\tau}$,
with $X_i=\mathbold{c}_{\tau}$, $T=\mathbold{\tau}$, $a_i=0$ and $b_i=\alpha'(r)$.
This yields
\begin{equation}
\label{eq:hoeffding2}
\mathbb P\left[ \left| \mathbold{c} -pc(r)\right| \geq \lambda \right] \leq
2\exp \left( \frac{-2{\mathbold{\tau}}\lambda^2}{ \alpha'(r)^2 } \right)  = \delta,
\end{equation}
which means if $\mathbold{\tau}\geq \frac{\alpha'(r)^2\log\frac{2}{\delta}}{2\lambda^2}$,
Algorithm~\ref{algorithm:kpath} estimates $k$-path score of $r$
within an additive error $\lambda$ with a probability at least $1-\delta$.
\end{proof}

The difference between the error bounds presented in Theorem~\ref{theorem:errorbound2}
and those that can be obtained for \textsf{RA-kpath} \cite{Alahakoon:2011:KCN:1989656.1989657} is that
on the one hand for the same error guarantee our algorithm 
requires $\alpha'(r)^2$ times less samples.
On the other hand, while in our algorithm each iteration (sample) takes $O(k|D|)$ time,
in \textsf{RA-kpath} it takes $O(k|V(G)|)$ time.
As a result, for the same error guarantee,
our algorithm is $O(\frac{|V(G)|}{\alpha'(r)^2|D|})$ times faster than \textsf{RA-kpath}.
Let $I$ be the set of the edges of $G$ whose both end-points are in $D$.
While space complexity of our algorithm is $O(|D|+|I|)$,
it is $O(|V(G)|+|E(G)|)$ for \textsf{RA-kpath}.
Note that for vertices of real-world graphs, $\alpha'$ is usually considerably less than $1$
and $|D|$ and $|I|$ are respectively considerably smaller than $|V(G)|$ and $|E(G)|$.
Similar results hold for our algorithm against the algorithm of \cite{Mahmoody:2016:SBC:2939672.2939869},
as it uses the same sampling strategy as \textsf{RA-kpath}.


In the last part of this section,
similar to the case of betweenness centrality,
in Theorem~\ref{theorem:errorbound3} we discuss how the stopping condition
of Algorithm~\ref{algorithm:kpath}
can be determined adaptively. 
The proof of this theorem is similar to the proof of Theorem~\ref{theorem:errorbound} (hence, we omit it).
There are, however, two key differences between 
Theorems~\ref{theorem:errorbound} and \ref{theorem:errorbound3}.
First, as shown in Lemma~\ref{lemma:variance2},
in Theorem~\ref{theorem:errorbound3}
the variance of each $\mathbold{c}_{\tau}$ is bounded by $\alpha'(r) pc(r)$,
where it is $\alpha(r)bc(r)$ in Theorem~\ref{theorem:errorbound}.
Second, as shown in Theorem~\ref{theorem:errorbound2},
when the number of samples is $\frac{\alpha'(r)^2\log\frac{4}{\delta}}{2\lambda^2}$
the estimation error of the $k$-path score of $r$ is bounded by $\lambda$, with
probability $1-\delta/2$. 
We refer to this quantity as $\omega'$ and use it in Theorem \ref{theorem:errorbound3}, as the 
upper bound, for the required number of samples.

\begin{lemma}
\label{lemma:variance2}
In Algorithm~\ref{algorithm:kpath},
for each $\mathbold{c}_{\tau}$ we have:
$\mathbb Var \left[\mathbold{c}_{\tau}\right] \leq \alpha'(r) pc(r).$
\end{lemma}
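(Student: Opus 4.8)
The plan is to mirror the computation in Lemma~\ref{lemma:variance}, but now account for the extra structure of the $k$-path sampling. First I would write $\mathbb{Var}[\mathbold{c}_{\tau}] = \mathbb{E}[\mathbold{c}_{\tau}^2] - \mathbb{E}[\mathbold{c}_{\tau}]^2$ and note that by Lemma~\ref{lemma:expectedvalue2} the second term is $pc(r)^2 \geq 0$, so it suffices to bound $\mathbb{E}[\mathbold{c}_{\tau}^2]$ from above by $\alpha'(r)\,pc(r)$. Expanding the second moment over the sample space exactly as in the proof of Lemma~\ref{lemma:expectedvalue2}, the probability of drawing a particular triple $(s,l,p_{s,l})$ is $\frac{\mathbb{P}[p_{s,l}]}{k|RF|}$, and the contribution to $\mathbold{c}_{\tau}^2$ on a path containing $r$ is $\left(\frac{|RF|\,\mathcal{W}(p_{s,l})}{|V(G)|\,\mathbb{P}[p_{s,l}]}\right)^2$. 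Multiplying these gives
\[
\mathbb{E}\left[\mathbold{c}_{\tau}^2\right] = \sum_{s\in RF}\sum_{l\in[1,k]}\sum_{p_{s,l}\in D} \frac{|RF|\,\mathcal{W}(p_{s,l})^2\,\chi[r\in p_{s,l}]}{k|V(G)|^2\,\mathbb{P}[p_{s,l}]}.
\]

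The key step is the inequality $\mathbb{P}[p_{s,l}] \geq \mathcal{W}(p_{s,l})$, which is exactly the fact already invoked in the proof of Theorem~\ref{theorem:errorbound2} and holds for either admissible definition of $\mathcal{W}$ (the denominators in $\mathbb{P}$ are taken over $N(u_{i-1})\cap\mathcal{D}(r)$, which is a subset of $N(u_{i-1})$ used in the original $\mathcal{W}$, so each factor of $\mathbb{P}$ is at least the corresponding factor of $\mathcal{W}$; and under the revised $\mathcal{W}$ they are equal). Hence $\frac{\mathcal{W}(p_{s,l})^2}{\mathbb{P}[p_{s,l}]} \leq \mathcal{W}(p_{s,l})$, and substituting yields
\[
\mathbb{E}\left[\mathbold{c}_{\tau}^2\right] \leq \frac{|RF|}{k|V(G)|^2}\sum_{s\in RF}\sum_{l\in[1,k]}\sum_{p_{s,l}\in D}\chi[r\in p_{s,l}]\,\mathcal{W}(p_{s,l}) = \frac{|RF|}{|V(G)|}\cdot pc(r) = \alpha'(r)\,pc(r),
\]
where the middle equality is just the definition of $pc(r)$ from Equation~\ref{eq:revised_pc} (the $\frac{1}{k|V(G)|}$ in that definition absorbs the $k$, leaving $\frac{1}{|V(G)|}$), and the last equality uses $\alpha'(r)=\frac{|\mathcal{RF}(r)|}{|V(G)|}$. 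Dropping the nonnegative $-pc(r)^2$ term finishes the bound.

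The only real obstacle is being careful that the inequality $\mathbb{P}[p_{s,l}]\geq\mathcal{W}(p_{s,l})$ is stated and used for \emph{both} versions of $\mathcal{W}$ simultaneously, since the sum defining $pc(r)$ in Equation~\ref{eq:revised_pc} must use the \emph{same} $\mathcal{W}$ as the one appearing in the estimator $\mathbold{c}_{\tau}$; as long as one fixes a single choice of $\mathcal{W}$ throughout, the cancellation $\frac{\mathcal{W}^2}{\mathbb{P}} \le \mathcal{W}$ and the reassembly into $pc(r)$ both go through verbatim. Everything else is the same bookkeeping already carried out in Lemmas~\ref{lemma:expectedvalue2} and \ref{lemma:variance}.
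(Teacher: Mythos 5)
Your proposal is correct and follows essentially the same route as the paper's own proof: expand $\mathbb E[\mathbold{c}_{\tau}^2]$ over the sampling distribution, apply $\mathbb P[p_{s,l}]\geq \mathcal W(p_{s,l})$ (so $\mathcal W^2/\mathbb P\leq \mathcal W$), reassemble the remaining sum into $\alpha'(r)\,pc(r)$ via Equation~\ref{eq:revised_pc}, and drop the $-pc(r)^2$ term. Your added care about using one fixed definition of $\mathcal W$ consistently matches the paper's remark that the lemma holds for both definitions.
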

\begin{proof}
\begin{flalign}
&\mathbb Var \left[\mathbold{c}_{\tau}\right] = \mathbb E\left[\mathbold{c}_{\tau}^2 \right] - \mathbb E\left[\mathbold{c}_{\tau} \right]^2= \nonumber  \\  
& \sum_{s \in RF} \sum_{l \in [1,k]} \sum_{p_{s,l} \in D} \left( \frac{ |RF|^2 \mathcal W(p_{s,l})^2 \chi[r \in p_{s,l}]^2 }{|V(G)|^2\mathbb P\left[p_{s,l}\right]^2} \cdot  
\frac{\mathbb P\left[p_{s,l}\right]}{k|RF|} \right) - pc(r)^2  \nonumber \\
&\leq  \frac{\alpha'(r)}{k|V(G)|} \sum_{s \in RF} \sum_{l \in [1,k]} \sum_{p_{s,l} \in D}
\left( \mathcal W(p_{s,l}) \chi[r \in p_{s,l}] \right) - pc(r) ^2 \nonumber \\
&\leq \alpha'(r) pc(r) - pc(r) ^2 \leq \alpha'(r) pc(r). 
\end{flalign}
\end{proof}
Note that Lemma~\ref{lemma:variance2} holds for both definitions of $\mathcal W$.

In Algorithm~\ref{algorithm:kpath}, 
if the number of samples is equal to $\omega'$, 
with probability at least $1-\delta/2$ we have: $|pc(r) - \mathbold{c}| < \lambda $.
For the case of $\mathbold{\tau} < \omega'$,
we present the following theorem.

\begin{theorem}
\label{theorem:errorbound3}
In Algorithm~\ref{algorithm:kpath}, 
for real values $\delta_1,\delta_2 \in (0,1)$
and the value of $\omega'$ defined above,
after $\mathbold{\tau} < \omega'$ iterations of the loop in
Lines~\ref{line2:loop1:start}-\ref{line2:loop1:end}, we have:
\[\mathbb{P}\left[ pc(r)- \mathbold{c} \leq -A' 
\right] \leq \delta_1 \text{  and  }
\mathbb{P}\left[ pc(r)-\mathbold{c} \geq  B'
\right] \leq \delta_2,\]
where
\[A'=\frac{-1}{\mathbold{\tau}}  \log \frac{1}{\delta_1} 
\left( \frac{1}{3} - \frac{\omega' \alpha'(r)}{\mathbold{\tau}} + \sqrt{\left( \frac{1}{3} -\frac{\omega' \alpha'(r)}{\mathbold{\tau}} \right)^2  + \frac{2\mathbold{c}  \omega' \alpha'(r)}{\log\frac{1}{\delta_1}} }  \right)\]
\[B'=\frac{-1}{\mathbold{\tau}}  \log \frac{1}{\delta_1} 
\left( \frac{1}{3} - \frac{\omega' \alpha'(r)}{\mathbold{\tau}} + \sqrt{\left( \frac{1}{3} -\frac{\omega' \alpha'(r)}{\mathbold{\tau}} \right)^2  + \frac{2\mathbold{c}  \omega' \alpha'(r)}{\log\frac{1}{\delta_1}} }  \right).\]
\end{theorem}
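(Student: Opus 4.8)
The plan is to reproduce, essentially verbatim, the argument of Theorem~\ref{theorem:errorbound}, replacing the exact variance of Lemma~\ref{lemma:variance} by the upper bound of Lemma~\ref{lemma:variance2} and the quantities $bc(r),\alpha(r),\omega$ by $pc(r),\alpha'(r),\omega'$. As there, I would prove only the first inequality, the lower bound on $pc(r)-\mathbold{c}$; the upper bound follows by the symmetric computation applied to $-\mathbold{c}_\tau$ with $\delta_2$ in place of $\delta_1$. First I would set up the probabilistic scaffolding: put $\mathbold{Y}^{\tau}=\sum_{i=1}^{\tau}\mathbold{c}_i$ and define the stopped process $\mathbold{Z}^{\tau}=\mathbold{Y}^{\min(\mathbold{\tau},\tau)}$, which, after centering each increment by $pc(r)$, is a martingale for the natural filtration generated by the samples drawn in the loop of Lines~\ref{line2:loop1:start}--\ref{line2:loop1:end}; by Lemma~\ref{lemma:expectedvalue2} the expectation of $\mathbold{Z}^{\omega'}$ is $\mathbold{\tau}\,pc(r)$ and its value there is $\mathbold{\tau}\mathbold{c}$.

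Next I would instantiate Theorem~\ref{theorem:McDiarmid} on $\mathbold{Z}^{\omega'}$. Two ingredients are needed. For the summed variance: since $\mathbold{\tau}<\omega'$, Lemma~\ref{lemma:variance2} gives $\sum_{\tau=1}^{\min(\mathbold{\tau},\omega')}\mathbb Var[\mathbold{c}_\tau]\le\mathbold{\tau}\,\alpha'(r)\,pc(r)\le\omega'\,\alpha'(r)\,pc(r)$; note that here, unlike the betweenness case, Lemma~\ref{lemma:variance2} is only an inequality, but this is harmless because the tail bound of Theorem~\ref{theorem:McDiarmid} is monotone increasing in $\sum_i\sigma_i^2$, so any valid upper bound on the summed variances may be plugged in. For the increment bound: the proof of Theorem~\ref{theorem:errorbound2} already records $\mathbold{c}_\tau\in[0,\alpha'(r)]\subseteq[0,1)$, so one may take $M=1$ (or, a little sharper, $M=\alpha'(r)$); to stay consistent with the form of $A'$ and $B'$ in the statement I would keep $M=1$. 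Feeding these into Theorem~\ref{theorem:McDiarmid} yields
\[\mathbb{P}\bigl[\mathbold{\tau}\mathbold{c}-\mathbold{\tau}\,pc(r)\ge\lambda\bigr]\le\exp\!\left(-\frac{\lambda^{2}}{2\bigl(\omega'\,\alpha'(r)\,pc(r)+\tfrac{\lambda}{3}\bigr)}\right),\]
together with the mirror-image inequality for the opposite tail.

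Finally I would perform the two algebraic steps that turn this into the claimed form. Setting the right-hand side equal to $\delta_1$ and solving the resulting quadratic for $\lambda$ gives $\lambda=\tfrac13\log\tfrac1{\delta_1}+\sqrt{\tfrac19(\log\tfrac1{\delta_1})^{2}+2\omega'\,\alpha'(r)\,pc(r)\,\log\tfrac1{\delta_1}}$; substituting this $\lambda$ back into the event inside $\mathbb{P}[\cdot]$ leaves a deviation threshold that still mentions the unknown $pc(r)$. The last step — and the one I expect to be \emph{the only genuinely fiddly part} — is to rewrite the event $\bigl\{\mathbold{\tau}\,(\mathbold{c}-pc(r))\ge\tfrac13\log\tfrac1{\delta_1}+\sqrt{\tfrac19(\log\tfrac1{\delta_1})^{2}+2\omega'\,\alpha'(r)\,pc(r)\,\log\tfrac1{\delta_1}}\bigr\}$ as a one-sided bound on $pc(r)-\mathbold{c}$ whose threshold involves only observable quantities: one treats it as a quadratic inequality in $pc(r)$ (the radicand being linear in $pc(r)$), isolates $pc(r)$, and simplifies, obtaining $pc(r)-\mathbold{c}\le -A'$ with $A'$ as displayed (with $bc(r)$ there replaced by the observed estimate $\mathbold{c}$, exactly as $A$ emerges in the proof of Theorem~\ref{theorem:errorbound}). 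Repeating the same computation for the upper tail with $\delta_2$ yields the bound involving $B'$, and the two together are the assertion of the theorem; every other step is the routine bookkeeping already carried out for Theorem~\ref{theorem:errorbound}.
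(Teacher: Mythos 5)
Your proposal is correct and matches the paper's intended argument: the paper omits this proof, stating only that it parallels Theorem~\ref{theorem:errorbound} with the variance bound $\alpha'(r)\,pc(r)$ of Lemma~\ref{lemma:variance2} and with $\omega'$ in place of $\omega$, which is precisely the substitution-and-repeat martingale argument (Theorem~\ref{theorem:McDiarmid}, stopped process, solve the quadratic in the unknown centrality) that you carry out. Note only that your derivation naturally produces $A'$ and $B'$ in the same form as $A$ and $B$ of Theorem~\ref{theorem:errorbound} (positive prefactor $\tfrac{1}{\mathbold{\tau}}$, and $\delta_2$ with the flipped signs in $B'$), so the printed expressions in the statement---identical to each other and carrying a factor $\tfrac{-1}{\mathbold{\tau}}$---appear to be typographical slips in the paper rather than a gap in your argument.
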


\section{Experimental results}
\label{sec:experimentalresults}

We perform experiments
over several real-world networks to assess
the quantitative and qualitative behavior of
our proposed algorithms:
\textsf{ABAD} and \textsf{APAD}.
We test the algorithms over several real-world datasets
from the SNAP repository\footnote{\url{https://snap.stanford.edu/data/}},
including
the {\em com-amazon} network \cite{DBLP:conf/icdm/YangL12},
the {\em com-dblp} co-authorship network \cite{DBLP:conf/icdm/YangL12},
the {\em email-EuAll} email communication network \cite{DBLP:journals/tkdd/LeskovecKF07},
the {\em p2p-Gnutella31} peer-to-peer network \cite{DBLP:journals/tkdd/LeskovecKF07}
and
the {\em web-NotreDame} web graph \cite{albert1999dww}. 
All the networks are treated as directed graphs\footnote{The approximate algorithms can be successfully run
over larger networks.
Here, however, the bottleneck is to run the exact algorithm for measuring the errors!}.
For a vertex $r \in V(G)$, its {\em empirical approximation error} is defined as: 
$Error(r)=\frac{|app(r)-ext(r)| }{ext(r)} \times 100,$
where $app(r)$ is the calculated approximate score and
$ext(r)$ is its exact score.

%

\subsection{Betweenness centrality}
\label{sec:exp:bc}

In this section, we present the empirical results of \textsf{ABAD}.
We compare \textsf{ABAD} against the most efficient
existing algorithms, which are
\textsf{KADABRA} \cite{DBLP:conf/esa/BorassiN16}
and \textsf{BCD} \cite{DBLP:journals/corr/abs-1708-08739}.
The stopping condition used in \textsf{KADABRA} to determine the number of samples
can vary to either estimate betweenness centrality of (a subset of size $k$ of) vertices
or to find top $k$ vertices that have highest betweenness scores.
Here, due to consistency with the stopping condition used by Algorithm~\ref{algorithm:abad},
we use the setting of \textsf{KADABRA} that
estimates betweenness scores of (a subset of size $k$ of) vertices.
Furthermore, \textsf{KADABRA} has a parameter $k$ that determines
the number of vertices for which we want to estimate the betweenness score.
By increasing the value of $k$, \textsf{KADABRA} becomes slower.
In our experiments, we set\footnote{For a given value of $k$,
\textsf{KADABRA} estimates betweenness score of $k' \geq k$ vertices.
On the one hand, when we set $k$ to 1,
over several datasets our chosen vertices $r$ are not among the vertices
for which betweenness centrality is computed.
On the other hand, as mentioned before,
by increasing the value of $k$, \textsf{KADABRA} becomes slower.
Therefore, we set $k$ to 10 where 
in most cases our chosen vertices $r$ are among
the vertices for which \textsf{KADABRA} estimates betweenness centrality and yet,
it is not large.} $k$ to 10.
\textsf{BCD} is not adaptive
and it does not automatically compute the number of samples required for
the given values of $\lambda$ and $\delta$.
Furthermore, unlike \textsf{ABAD} and \textsf{KADABRA}, 
\textsf{BCD} is a (source) vertex sampler algorithm, hence, 
it requires much more time to process each sample. 
Thus, for \textsf{BCD} we cannot use the same number of samples as \textsf{ABAD} (or \textsf{KADABRA}).
In order to have fair comparisons, in our tests we let \textsf{BCD} have as much as samples that
it can process within the running time of \textsf{ABAD}. 
As a result, while \textsf{ABAD} and \textsf{BCD} will have (almost) the same running times,
the number of samples of \textsf{BCD} will be much less.

\begin{table*}
\caption{\label{table:com-amazon}Empirical evaluation of \textsf{ABAD} against \textsf{KADABRA} and \textsf{BCD}
for five vertices that have the highest betweenness scores.
The value of $\delta$ is $0.1$ and times are in seconds.
Times reported for \textsf{ABAD}
include both computing $\mathcal{RF}$ and $\mathcal{RT}$ and estimating betweenness score.
}
\begin{center}
\begin{tabular}{ l | l | l l l l l l | l l  l | l l }
\hline\hline 
\multirow{3}{*}{Dataset} & \multirow{3}{*}{$\lambda$} & \multicolumn{6}{|c|}{\textsf{ABAD}} & \multicolumn{3}{|c}{\textsf{KADABRA}} & \multicolumn{2}{|c}{\textsf{BCD}}\\ 
    &  &  \multicolumn{2}{c}{$\alpha$} &  \multicolumn{2}{c}{Time} &   \multicolumn{2}{c|}{Error} &  \multirow{2}{*}{Time} &  \multicolumn{2}{c|}{Error} &  \multicolumn{2}{c}{Error} \\
    &  & avg & max & avg & max  & avg & max  &    & avg & max  & avg & max   \\
\hline
\multirow{3}{*}{{\em com-amazon}} & 0.001   & 0.0000027  &  0.0000044 & 0.48 & 0.55 & \textbf{0.45} & \textbf{1.19} & 1.32 & 29.06 & 29.06  & 1.28 & 1.99 \\
 &0.00075 &    &  & 0.71 & 0.86 &  \textbf{0.36}  & \textbf{0.73 }  &  2.01 & 28.37 & 56.49 & 1.46 & 3.90 \\
 &0.0005  &   &   & 1.39 & 1.7  & \textbf{0.41} & \textbf{1.04} & 3.65 & 26.22 & 63.14 & 1.09 &   2.25 \\
\hline
\multirow{3}{*}{{\em com-dblp}} & 0.001   &   0.0077  &  0.0094  & 5.54 & 6.72 & \textbf{3.93} & \textbf{7.84} & 17.55 & 9.78 & 16.99  & 6.02 &  15.59  \\
& 0.00075 &   &  &  9.31 & 11.46 & \textbf{3.76}  & 7.08 &  28.79 &  9.16  & 14.31 & 4.46 & \textbf{6.42} \\
& 0.0005  &   &  & 19.07 & 23.15 & \textbf{3.41} & 7.46 & 54.46 &  8.49 & 14.98 & 3.47& \textbf{4.48} \\
\hline
\multirow{3}{*}{{\em email-EuAll}} & 0.001  &  0.0012 &  0.0013 & 1.50 & 1.79 & \textbf{1.10} &\textbf{2.01 } & 1.33 & 16.70 & 23.02& 2.63 & 6.60 \\
& 0.00075 &   &   &  2.47 & 2.91& \textbf{2.00}  & \textbf{3.73 }&  2.05 & 8.67  & 11.94 & 3.51 & 10.13 \\
& 0.0005  &    &   & 5.11 & 6.27 & 2.33   &  4.95 &  3.85  & 5.76 & 9.76  & \textbf{2.04 }& \textbf{3.89 } \\
\hline
\multirow{3}{*}{{\em p2p-Gnutella31}} & 0.001   &   0.01  & 0.01& 1.36  & 1.59&  \textbf{2.71 } & \textbf{4.24}& 12.44 &  4.94  & 13.98& 5.88  & 11.55\\
 & 0.00075 &      & & 2.27& 2.69 &  \textbf{2.36} & \textbf{4.32} & 19.71 &  2.93 & 7.46 & 6.18& 10.62 \\
 & 0.0005  &  &  &  4.75 & 5.57 &   \textbf{1.16} & \textbf{2.31}& 38.24 &  2.52 & 4.73  & 4.29& 9.28 \\
\hline
\multirow{3}{*}{{\em web-NotreDame}}  & 0.001   &   0.0021  &  0.0033& 0.91 & 1.25&  \textbf{2.17} & \textbf{4.45} & 2.80 & 3.59  & 7.14 & 4.59& 9.75 \\
 & 0.00075 &   &   &  1.5105 &  2.115 & \textbf{1.60}   & \textbf{2.81}& 4.47  & 4.21  & 7.85& 1.83 & 4.38  \\
 & 0.0005  &   &    &  2.98 &  4.11& \textbf{1.15} &  \textbf{2.38}  & 8.74 & 2.59& 4.83 & 2.27& 5.54 \\
\hline
\end{tabular}
\end{center}
\end{table*}

\textsf{ABAD} and \textsf{BCD} need to specify the vertex for which we want to estimate betweenness score.
For each dataset, we choose the five vertices that have the highest betweenness scores and
run \textsf{ABAD} and \textsf{BCD} for each of them.
Choosing the vertices that have the highest
betweenness scores has two reasons.
First, exact betweenness of the vertices that have a small $\alpha$ (or $|\mathcal{RF}|$) 
can be computed very efficiently \cite{DBLP:journals/corr/abs-1708-08739}.
Therefore, using approximate algorithms
for them is not very meaningful.
In fact by choosing vertices with high betweenness scores we guarantee 
that the chosen vertices have large enough $\alpha$ (or $|\mathcal{RF}|$),
so that it makes sense to run approximate algorithms for them.
Second, these vertices usually have a higher $\alpha$ than the other vertices.
Hence, it is likely that \textsf{ABAD} will estimate betweenness centrality of the other vertices
more efficiently.
For instance, in {\em web-NotreDame} consider those five randomly chosen vertices
that are used in the experiments of \cite{DBLP:journals/corr/abs-1708-08739}.
These vertices have the following $\alpha$ values:
$0.000000001$, $0.000011$, $0.0000016$, $0.00000051$ and $0.0000017$.
These values are much smaller than the $\alpha$ of
the vertices that have the highest betweenness scores (see Table~\ref{table:com-amazon}).
Therefore, \textsf{ABAD} will have a much better performance for these vertices.

Table~\ref{table:com-amazon} reports the experimental results.
In each experiment, the bold value shows the one that has the lowest error.
In all the experiments,
we set $\delta$ to $0.1$ for both
\textsf{ABAD} and \textsf{KADABRA}.
Since the behavior of these algorithms depends on the value of $\lambda$,
we run them for three different values of $\lambda$: $0.001$, $0.00075$ and $0.0005$.
Over {\em com-amazon}, some vertices 
are not among the vertices for which \textsf{KADABRA} computes the betweenness score,
hence, they do not contribute in the reported results.
Since the running time of \textsf{BCD} is very close to the running time of \textsf{ABAD} (as we set so),
we do not report it in Table~\ref{table:com-amazon}.

Our results show that \textsf{ABAD}
considerably outperforms both \textsf{KADABRA} and \textsf{BCD}.
It works always better than \textsf{KADABRA};
furthermore, in most cases (38 out of 42 cases)
it is more accurate than \textsf{BCD}.
This is due to very low values of $\alpha$ that vertices of real-world graphs have.
In our experiments, $\alpha$ is always smaller than $0.04$. 
This considerably restricts the search space and hence, improves the efficiency of \textsf{ABAD}.
The vertices tested in our experiments are those that have the highest
betweenness scores.
This implies that compared to the other vertices,
they usually have a high $\alpha$.
As a result, the relative efficiency of
\textsf{ABAD} with respect to \textsf{KADABRA} 
for the other vertices of the graphs (that do not have a very high betweenness score)
will be even better.
It should also be noted that for a vertex $r\in V(G)$,
while $\omega$ is computed,
the sets $\mathcal{RF}(r)$ and $\mathcal{RT}(r)$
are computed, too.
Therefore, computation of these sets does not impose any extra computational cost
on \textsf{ABAD}.

\begin{table*}
\caption{\label{table:exp:kpath}Empirical evaluation of \textsf{APAD} against \textsf{RA-kpath}.
All times are in seconds. Times reported for \textsf{APAD}
include both computing reachable set and estimating $k$-path score.
}
\begin{center}
\begin{tabular}{ l | l | l l l l l l | l l l }
\hline\hline 
\multirow{3}{*}{Dataset} & \multirow{3}{*}{$\#$samples} & \multicolumn{6}{|c|}{\textsf{APAD}} & \multicolumn{3}{|c}{\textsf{RA-kpath}} \\ 
     &  & \multicolumn{2}{c}{$\alpha'$} & \multicolumn{2}{c}{Time} &  \multicolumn{2}{c|}{Error} & \multirow{2}{*}{Time} & \multicolumn{2}{c}{Error} \\
     &  & avg & max & avg & max  & avg & max  &    &  avg & max   \\
\hline\hline
\multirow{3}{*}{{\em com-amazon}} & 50000  & 0.0176 & 0.0375 &  0.6202 & 0.6271   & \textbf{9.4273}  & \textbf{20.3553}   &    0.6201  & 17.3874  & 35.9696\\
 &100000  &      &      & 1.4710  & 1.4945 &  \textbf{8.7553} & \textbf{16.9069} &  1.4483   &  23.5959 & 47.8998 \\
 &500000  &    &      & 6.1657 & 6.3111 & \textbf{8.5740}  & \textbf{16.2172} &  6.0273   & 15.0688  & 28.1798 \\
\hline
\multirow{3}{*}{{\em com-dblp}} & 50000   & 0.2984  & 0.5837  &  0.6120 & 1.4015 & \textbf{8.3598} & \textbf{14.0485}   & 0.6990  &   18.8275& 39.3276  \\
& 100000 &    &     &  1.4015 &  1.4227 &  \textbf{8.3598}   &  \textbf{13.5833} & 1.3864   &  18.8275 & 56.3778  \\
& 500000  &   &      & 6.0236 &  6.1431 & 7.3327  & \textbf{10.8506} & 5.8907   &  \textbf{7.1508} &  12.7557  \\
\hline
\multirow{3}{*}{{\em email-EuAll}} & 50000   &  0.3057  & 0.4563 & 0.4517  & 0.4570 & \textbf{4.4155}  &  \textbf{9.9514}  & 0.4409   & 9.0128  & 26.2509 \\
& 100000 &     &      &  1.0388 & 1.0904  &  \textbf{3.2181} &  \textbf{7.6999} &  1.0128  &  5.4020 &  13.1260 \\
& 500000  &   &     &  4.3868 &  4.4107 & \textbf{1.3479} &  \textbf{3.8887} &  4.3044    &  2.3709 & 3.8737 \\
\hline
\multirow{3}{*}{{\em p2p-Gnutella31}} & 50000   &  0.2050  & 0.5776  & 0.1666 & 0.1715 & \textbf{10.1815} & \textbf{26.4400} & 0.1633 & 25.1530& 56.1906    \\
 & 100000 &    &     &  0.7527& 0.3947  &  \textbf{9.8433} & \textbf{24.2559}  &  0.3436 &  12.8763 &   24.9524 \\
 & 500000  &     &      & 1.5625  & 1.5896 & \textbf{6.5869} & \textbf{20.6747} & 1.5652  &  20.1032 &  32.5738 \\
\hline
\multirow{3}{*}{{\em web-NotreDame}}  & 50000   &   0.3520 &  0.7182  &  0.6116  &  0.6152  & \textbf{4.6194} & \textbf{9.4785} &  0.6033  & 7.8810 & 13.2039  \\
 & 100000 &   &    &  1.2208  & 1.2331   & 3.4717  & 11.1948  &  1.2108  & \textbf{2.3478}  &  \textbf{5.8867} \\
 & 500000  &    &     &  5.9969  & 6.0496  & 2.9064 & \textbf{4.9510} & 6.0111 & \textbf{2.2955} & 4.9636  \\
\hline
\end{tabular}
\end{center}
\end{table*}

Someone may complain that \textsf{ABAD} (like \textsf{BCD}) estimates 
betweenness score of only one vertex,
whereas \textsf{KADABRA} can estimate betweenness scores of all vertices.
However, as mentioned before,
in several applications it is sufficient to compute this index for only one vertex 
or for a few vertices.
Even if in an application we require to compute betweenness score of a set of vertices,
\textsf{ABAD} can be still useful.
In this case, after computing $\mathcal{RF}$ and $\mathcal{RT}$ of all the vertices in the set in parallel,
someone may run the sampling part of Algorithm~\ref{algorithm:abad}  
(Lines~\ref{line:loop1:start}-\ref{line:loop1:end} of the algorithm)
for all the vertices in the set in parallel,
as was done in the existing algorithms.
Note that even if \textsf{ABAD} is run independently for each vertex in the set,
in many cases it will outperform \textsf{KADABRA}.
For example, consider our experiments where we run \textsf{ABAD}
for five vertices that have the highest betweenness scores.
In the reported experiments,
for each dataset,
by simply summing up the running times and the number of samples of five vertices,
we can obtain the running time and the number of samples
of the independent runs of \textsf{ABAD} for all the five vertices.
Even in this case,
the number of samples used by \textsf{ABAD} to estimate betweenness centrality
of all five vertices is considerably (at least 10 times!) less than the number of samples used
by \textsf{KADABRA}.
Moreover, in most cases, \textsf{ABAD} is more accurate than \textsf{KADABRA}.
Finally, while over {\em p2p-Gnutella31}
\textsf{ABAD} takes less time to compute betweenness scores of all five vertices,
over the other datasets \textsf{KADABRA} is faster.
A reason for the faster performance of \textsf{KADABRA}
over
{\em com-amazon},
{\em com-dblp}, {\em email-EuAll}
and {\em web-NotreDame} (despite the fact that it uses much more samples) is that it
processes most of these samples in parallel,
whereas in the independent runs of \textsf{ABAD},
less samples are processed in parallel.
Using considerably less samples by \textsf{ABAD} gives a sign that if
\textsf{ABAD} estimates betweenness scores of all vertices of a set in parallel,
it might become always faster than \textsf{KADABRA}.
Developing a fully parallel version of \textsf{ABAD} that estimates betweenness scores of
a set of vertices in parallel is an interesting direction for future work.

\subsection{$k$-path centrality}

In this section, we present the empirical results of \textsf{APAD}.
We compare \textsf{APAD} against \textsf{RA-kpath} \cite{Alahakoon:2011:KCN:1989656.1989657}
that samples a vertex $s \in V(G)$, an integer $l \in \left[1 , k\right]$
and a random path that
starts from $s$ and has $l$ edges.
This sampling method has been used by
some other work, including \cite{Mahmoody:2016:SBC:2939672.2939869}.
The error bound presented in \cite{Alahakoon:2011:KCN:1989656.1989657}
is not adaptive and has only one parameter $\alpha$ (different from the parameter $\alpha$ used in this paper),
instead of two parameters $\lambda$ and $\delta$ that we have for \textsf{APAD}.
This makes comparison of the methods slightly challenging.
In order to have a fair comparison
(regardless of the method used to determine the number of samples
for an ($\lambda,\delta$)-approximation),
we run each of them for a fixed number of iterations and report 
the empirical approximation error and running time.
\textsf{APAD} requires to specify the vertex $r$ for which we want to compute $k$-path centrality.
In order to have experiments consistent with Section~\ref{sec:exp:bc},
in this section we use the same vertices used in the evaluation of \textsf{ABAD}. 
We repeat each experiment for three times and report the average results.

Table~\ref{table:exp:kpath} reports the experimental results of $k$-path centrality.
In each experiment, the bold value shows the one that has the lowest (average/maximum) error.
As can be seen in the table, in most cases \textsf{APAD} shows a better
average error and maximum error than \textsf{RA-kpath}.
This is particularly more evident over the datasets (e.g., \textit{com-amazon}) that have a low $\alpha'$ value,
which empirically validates our theoretical analysis on the connection between 
$\alpha'$ and the efficiency of \textsf{APAD}.
Note that as before mentioned, \textsf{RA-kpath} computes $k$-path centrality of all vertices,
whereas \textsf{APAD} requires to specify the vertex for which we want to compute the centrality score.
As a results, for different given vertices,
while only one run of \textsf{RA-kpath} is sufficient,
we require to have different runs of \textsf{APAD}.
Therefore in Table~\ref{table:exp:kpath}, for each dataset and sample size
we report only one time for \textsf{RA-kpath}.
In contrast, for each dataset and sample size we have five running times for \textsf{APAD},
where the average and maximum values are reported in Table~\ref{table:exp:kpath}.
The times reported for \textsf{APAD} include both
computing reachable set and estimating $k$-path score.
These times are only slightly larger than the times reported for \textsf{RA-kpath}.
This means that computing the reachable set of a vertex can be done 
very efficiently and in a time negligible compared to the running time of the whole process.

\section{Conclusion}
\label{sec:conclusion}

In this paper,
first we presented a novel adaptive algorithm
for estimating betweenness score of a vertex $r$ in a directed graph $G$.
Our algorithm computes the sets $\mathcal{RF}(r)$ and $\mathcal{RT}(r)$,
samples from them,
and stops as soon as some condition is satisfied.
The stopping condition depends on the samples met so far, $|\mathcal{RF}(r)|$ and $|\mathcal{RT}(r)|$.
We showed that our algorithm gives a more efficient $(\lambda,\delta)$-approximation
than the existing algorithms.
Then, we proposed a novel algorithm for estimating $k$-path centrality of $r$ 
and showed that in order to give a $(\lambda,\delta)$-approximation,
it requires considerably less
samples. Moreover, it processes each sample faster and with less memory.
Finally, we empirically evaluated our centrality estimation algorithms and showed
their superior performance.



%

\bibliographystyle{plain}
\bibliography{allpapers}

\end{document}